\newcommand{\ud}{\mathrm{d}}
\newcommand{\mf}{\mathfrak}
\newcommand{\ui}{\mathrm{i}}
\newcommand{\ue}{\mathrm{e}}
\newcommand{\eins}{\mathds{1}}
\newcommand{\be}{\begin{equation}}
\newcommand{\ee}{\end{equation}}
\newcommand{\ba}{\begin{aligned}}
\newcommand{\ea}{\end{aligned}}
\newcommand{\rz}{{\mathbb R}}
\newcommand{\nz}{{\mathbb N}}
\newcommand{\kz}{{\mathbb C}}
\newcommand{\bs}{\boldsymbol}
\DeclareMathOperator{\asinh}{arsinh}
\DeclareMathOperator{\Or}{O}
\DeclareMathOperator{\bv}{bv}
\DeclareMathOperator{\reg}{reg}
\DeclareMathOperator{\im}{Im}
\DeclareMathOperator{\re}{Re}
\DeclareMathOperator{\tr}{Tr}      
\DeclareMathOperator{\supp}{supp}
\DeclareMathOperator{\sgn}{sgn}
\newtheorem{theorem}{Theorem}[section]
\newtheorem{lemma}[theorem]{Lemma}
\newtheorem{prop}[theorem]{Proposition}
\newtheorem{remark}[theorem]{Remark}
\newtheorem{defn}[theorem]{Definition}
\newtheorem{rem}[theorem]{Remark}
\begin{document}
\title[Heat-kernel trace of a singular two-particle 1-d contact interaction]{An asymptotic expansion of the trace of the heat kernel of a singular two-particle contact interaction in one dimension}
\author{Sebastian Egger}
\address[S.~Egger]{Department of Mathematics, Technion-Israel Institute of Technology 629 Amado Building, Haifa 32000, Israel}
\email{egger@tx.technion.ac.il}
\begin{abstract} 
The regularized trace of the heat kernel of a one-dimensional Schr\"odinger operator with a singular two-particle contact interaction being of Lieb-Liniger type is considered. We derive a complete small-time asymptotic expansion in (fractional) powers of the time, $t$. Most importantly, we do not invoke standard parametrix constructions for the heat kernel. Instead, we first derive the large-energy expansion of the  regularized trace of the resolvent for the considered operator. Then, we exploit that the resolvent may be obtained by a Laplace transformation of the heat semi-group, and an application of a suitable inverse Watson lemma eventually yields the small-$t$ asymptotic expansion of the heat-kernel trace.      
\end{abstract}
\maketitle
\section{Introduction}
\label{intro}
The classical heat kernel is the kernel of an integral operator generating the semi-group of the heat equation on domains or manifolds. Given sufficient regularity of the boundary and assuming that the volume of the domain is finite one may conclude that the heat semi-group is a trace class operator. Mercer's famous theorem then tells that its trace may be calculated by integrating the heat kernel along its diagonal. 

Minakshisundaram and Pleijel showed in their celebrated work, \cite{MR0032861,MR0031145}, that the classical heat kernel possesses a complete small-time (small-$t$) asymptotic expansion in (fractional) powers of $t$. They were also able to show that its coefficients bear topological and metric information of the underlying domain or manifold. Most notable here is the heat-kernel approach to the profound Atiyah-Singer index theorem, \cite{MR0324731,MR0650829,MR0650828}, which links a specific coefficient of the heat-kernel expansion to the topological and analytic index, respectively, of an elliptic operator acting on sections of a corresponding vector bundle.

Another and very interesting aspect pointed out in \cite{MR0032861,MR0031145} is that the small-$t$ asymptotics of the trace of the heat kernel and the meromorphic properties of the spectral $\zeta$-function corresponding to the generating Laplacian of the heat equation are very closely related. From a physical point of view, the spectral $\zeta$-function is an important object in statistical physics and quantum mechanics to calculate, e.g., path integrals (Brownian motion), \cite{MR0524257}, spectral determinants in lattice QCD, or the Casimir force in QED, see, e.g., \cite{MR1890614,MR1730139}.
 
A standard approach to find an asymptotic expansion of the trace of the heat kernel is to derive a parametrix, i.e., a local small-$t$ asymptotic expansion for the solution of the heat equation, and then integrate its position dependent coefficients along their diagonals. The advantage of this method is that it's rather universally applicable and the coefficients may be identified as \textit{local invariants} of the underlying domain or manifold. Due to the overwhelming number of remarkable results we refer here (and references therein) to \cite{MR1130604,MR1994690,MR2017528,MR2064001,MR2040963,MR2550208,MR3137026} for the heat-kernel expansion in particular and to \cite{MR1780769,MR2569498,st09,MR2641770,MR3077281,MR3157991,KM18,MR3770367} for related modern results regarding asymptotic properties of the heat kernel. 

On metric graphs the trace of the heat kernel was first studied by Roth, \cite{MR711833}, deriving an exact (Selberg-like) trace formula for it. After that, the (one-particle) heat kernel has experienced an accelerated attention and analogous questions has been asked to the manifold case, \cite{MR1207868,MR1227664,MR1853353,MR2423580,MR2291812,MR3034773,MR3243602,MR3335059,MR3394421}. The motivation of studying contact interactions in many-particle physics for one-dimensional systems dates back to the famous Lieb-Liniger model, \cite{PhysRev.130.1605}, used to test Bogoliubov's perturbation theory for Bose gases. Since then many-particle contact interactions are well established in one dimensions and graphs for testing and modeling famous phenomenons such as superconductivity or Bose-Einstein condensation, \cite{MR3050159,MR3204513,MR3487209,MR3381152,MR3227513,2018arXiv180500725B,MR3720512}. For the importance of one-dimensional contact interactions in physics and recent experimental implementations we refer to \cite{RevModPhys.83.1405}. Those exciting results inspired us to consider as a very first example the small-$t$ asymptotics of the trace of the heat kernel for a simple but non-standard Lieb-Liniger type system of a singular but non-constant contact interaction on the real line, $\rz$. 

More precisely, the operator which we consider is the one-dimensional two-particle Schr\"odinger operator given by the formal expression
\be
\label{schroe}
-\Delta_{\rho}:=-\partial^2_{x_1}-\partial^2_{x_2}+\rho(x_1,x_2)\delta(x_1-x_2)\,,
\ee
with $L^2(\rz^2)$ as the two-particle Hilbert space and a $\delta$-potential modulated by a potential $\rho$. We assume that the modulating potential, $\rho$, is smooth and possesses compact support. The heat semi-group, $\ue^{t\Delta_{\rho}}$, satisfies $\partial_t\ue^{t\Delta_{\rho}}|_{t=0}=\Delta_{\rho}$ in a strong sense, and the heat kernel, $k^\rho(t)(\cdot,\cdot)$, $t\in\rz$, is the integral kernel generating $\ue^{t\Delta_{\rho}}$ by 
\be
(\ue^{t\Delta_{\rho}}\psi)(\bs{x})=\int\limits_{\rz^2}k^{\rho}(t)(\bs{x},\bs{y})\psi(\bs{y})\ud \bs{y}\,.
\ee
However, since the underlying configuration space is non-compact the Schr\"odinger operator \eqref{schroe} possesses an essential spectrum, and we have to regularize the heat semi-group in order to make it a trace-class operator. 

In this paper, we follow the method of \cite{MR3335059} and we don't derive a parametrix expansion for the heat kernel. Instead, we exploit that the resolvent and the heat semi-group are related via a Laplace transformation. This approach is similar to \cite{MR1994690}, where the authors used the so-called Agmon-Kannai method to derive an asymptotic expansion for the resolvent kernel of the corresponding Schr\"odinger operator. The Agmon-Kannai method, in turn, is a tool to obtain a series representation of the resolvent with operator-valued coefficients and is based on a recursive construction involving commutators of the free and the full resolvent of the considered operator, \cite{MR1809669}.

We don't use the Agmon-Kannai method here, but we exploit that the resolvent of our system allows a rather explicit representation in terms of suitable integral operators which is based on a more-general formula of Kre\u{\i}n. We start by first establishing an asymptotic expansion for large but negative energies of the regularized trace of the resolvent. Then, we use a suitable version of the converse Watson lemma to deduce the small-$t$ asymptotics of the regularized trace of the heat semi-group (and heat kernel). In general, our method is taking advantage of the symmetry properties of the underlying system, and the convenience of our method is that it allows a compact, explicit and quick formulation of the heat-kernel coefficients.

Finally, we refer to Appendix \ref{not}, where we recall standard notations and definitions used in this paper.

\section{Preliminaries}
\label{prel}
We begin our investigations by implementing a rigorous version of the formal Schr\"odinger operator \eqref{schroe}. To obtain a well-defined operator in $L^2(\rz^2)$ it is convenient to associate with \eqref{schroe} a quadratic form being complete and semi-bounded from below. Then, there is a unique self-adjoint operator corresponding to the quadratic form, \cite[Section~4.2]{Weidmann:2000}, which may be regarded as a rigorous version of \eqref{schroe}. 

To get the quadratic form of the operator \eqref{schroe} we first observe that the modulating potential $\rho$ only needs to be known on the diagonal (one-dimensional submanifold)
\be
D:=\{(x,x): \ x\in\rz\}\subset\rz^2\,
\ee
due to the $\delta$-potential interaction. Now, using the identification $\rho(x,x)={\rho}(x)$ we restrict ourselves to compactly supported and smooth potentials, i.e., ${\rho}\in C^{\infty}_{0}(\rz)$. To identify the associated quadratic form and operator we proceed as in \cite[Section~3.1]{Kerner}, replacing the interval $[0,1]$ with $\rz$ to obtain our case related to \eqref{schroe}. We follow the steps of \cite[Section~3.1]{Kerner} and to do so, we use $D$ and partition $\rz^2$ into the two disjoint open sets $D_+$ and $D_-$ by  
\be
\label{Dr2}
\rz^2=D_-~\dot{\cup}~D~\dot{\cup}~D_+\,,\quad D_+=\{(x_1,x_2):\ x_1<x_2\}\,, \quad D_-:=\{(x_1,x_2):\ x_1>x_2\}\,.
\ee
Moreover, since $D$ is a straight line, and hence a smooth curve, we may define the trace maps see, e.g., \cite[Theorem~1.5.1.1]{MR775683}
\be
\bv_{\pm}:H^1(D_{\pm})\rightarrow H^{\frac{1}{2}}(D). 
\ee
Both above maps are continuous linear maps. In the same way, the gradients 
\be
\nabla: H^2(D_{\pm})\rightarrow H^{1}(D_{\pm})\oplus H^{1}(D_{\pm})
\ee
are well-defined and continuous maps. Therefore, the inward normal derivatives, $\partial_{\bs{n}}$, w.r.t. the boundary $D$ of the domains $D_{\pm}$ act as
\be
\partial_{\bs{n}}:H^2(D_{+})\oplus H^2(D_{-})\rightarrow H^{\frac{1}{2}}(D)\oplus H^{\frac{1}{2}}(D)\,,
\ee
and are well-defined by
\be
\partial_{\bs{n}}(\psi_+\oplus\psi_-):=\psi_{\bs{n},+}\oplus\psi_{\bs{n},-}\,,
\ee
and
\be
\psi_{\bs{n},\pm}:=\mp(\partial_{x_1}-\partial_{x_2})\psi_{\pm}\,.
\ee

With these technical tools at hand, we may now use \cite[p.~6]{Kerner} allowing a proper identification of the two-particle and one-dimensional Schr\"odinger operator in \eqref{schroe} with a one-particle and two-dimensional operator acting on $\rz^2$. For the readers convenience we denote this operator by $-\Delta_{\rho}$ as well. The functions of the corresponding operator domain, $D(\Delta_{\rho})$, obey the following regularity and boundary conditions. If $\psi\in D(\Delta_{\rho})$ then $\psi \in H^2(D_+)\oplus H^2(D_-)\subset L^2(\rz^2)$ and the following boundary conditions are satisfied in a $L^2$-sense, \cite[p.~6]{Kerner}:
\be
\bv_+\psi=\bv_-\psi=:\Psi\,, \quad \psi_{\bs{n},+} + \psi_{\bs{n},-}=\rho\Psi\,.
\ee
Moreover, by \cite[p.~6]{Kerner} the operator $-\Delta_{\rho}$ is associated with the quadratic form $(q_{\rho},H^1(\rz^2))$ defined by
\be
q_{\rho}(\psi):=\int\limits_{\rz^2}\langle\nabla\psi,\nabla\psi\rangle_{\rz^2}\ud \bs{x}+\int\limits_{D}\rho\Psi\ud x\,,
\ee
where we used \eqref{Dr2}. On the other hand, given $(q_{\rho},H^1(\rz^2))$ then the associated operator is $-\Delta_{\rho}$ and is self-adjoint and bounded from below, \cite[Theorem~4.2]{MR1275948}. We denote by $\lambda_{\min,\rho}:=\inf\{\lambda\in\rz:\ \lambda\in\sigma(-\Delta_\rho)\}$ the bottom of the spectrum $\sigma(-\Delta_\rho)$ of the Schr\"odinger operator, and we recall the well-known fact $\lambda_{\min,0}=0$.

\section{The resolvent kernel}
At the beginning of this section, we derive an explicit expression of the resolvent, $\lambda\notin\sigma(-\Delta_{\rho})$,
\be
\label{res}
R_{\rho}(\lambda):=(-\Delta_{\rho}-\lambda)^{-1}\,.
\ee
To do so, we follow \cite{MR1275948}, and in the following, we choose $\sqrt{\lambda}=k\in\kz$ such that $\im k>0$.  We introduce the integral kernels
\be
\ba
G_0(k)(x_1,x_2,y_1,y_2)&:=\frac{1}{2\pi}K_0(-\ui k\sqrt{(x_1-y_1)^2+(x_2-y_2)^2})\,,
\ea
\ee
and
\be
\label{gk}
g(k)(x,y):=\frac{1}{2\pi}K_0(-\ui \sqrt{2} k|x-y|)\,.
\ee
The integral kernel $G_0(k)$ corresponds for $\im k> 0$ to a bounded operator $R_0(k):L^2(\rz^2)\rightarrow L^2(\rz^2)$
and $g(k)$ to a bounded operator $\mf{g}(k):L^2(\rz)\rightarrow L^2(\rz)$, respectively. That may be deduced from the asymptotic behavior of the $K_0$-Bessel function for large arguments, \cite[p.~139]{Magnus:1966}. It is worth mentioning that the logarithmic singularity of the $K_0$-Bessel function at the origin, \cite[p,~65]{Magnus:1966}, doesn't affect the boundedness of $R_0$ and $\mf{g}$ due to Young's inequality. Note that $R_0(k)$ is the (free) resolvent of the pure Laplacian on $\rz^2$. To write down the resolvent explicitly we need one more integral operator connecting $L^2(\rz^2)$ and $L^2(\rz)$. We introduce
\be
\label{b}
b(k)(x,y_1,y_2):=\frac{1}{2\pi}K_0(-\ui k\sqrt{(x-y_1)^2+(x-y_2)^2})\,,
\ee
and those integral kernel generates for $\im k>0$ a bounded operator $\mf{b}(k):L^2(\rz^2)\rightarrow L^2(\rz)$. Finally, we make the simple observation that any potential $\rho\in C^{\infty}_0(\rz)$, or only $\rho\in L^\infty(\rz)$, generates a bounded multiplication operator $\rho:L^2(\rz)\rightarrow L^2(\rz)$.

With these operators at hand, we now invoke \cite[Corollary~2.1]{MR1275948} saying that the resolvent $R_{\rho}(\lambda)$ may be written as
\be
\label{resolvent}
R_{\rho}(\lambda)={R}_0(k)-\mf{b}\left(\overline{k}\right)^{\ast}(\eins+\rho\mf{g}(k))^{-1}\rho{b}(k)\,,
\ee
where we put $k=\sqrt{\lambda}$.
In the following, it is convenient to denote by $C_{\alpha}$, $\alpha<\frac{\pi}{2}$, the cone around the positive imaginary axis $\ui\rz^+$ with opening angle $2\alpha$, i.e.,
\be
\label{calpha}
C_{\alpha}:=\left\{z\in\kz:\ |\arg(z)-\frac{\pi}{2}|<\alpha\right\}\,.
\ee
Then, for $k\in C_{\alpha}$, $\alpha<\frac{\pi}{2}$, and $|k|$ large enough one has, \cite[Corollary~2.2]{MR1275948},
\be
\label{smallerone}
\left\|\rho\mf{g}(k)\right\|<1\,.
\ee
Now, we introduce the regularized resolvent $R^{\reg}_{\rho}(k)$, $k\in C_{\alpha}$, $\alpha<\frac{\pi}{2}$, and $|k|$ sufficiently large, defined as $R^{\reg}_{\rho}(k):=R_{\rho}(k^2)-R_{0}(k^2)$. Due to the semi-boundedness of $q_{\rho}$ (and $q_{0}$) the operator $R^{\reg}_{\rho}(k)$ exists for $k\in C_{\alpha}$, $\alpha<\frac{\pi}{2}$ and $|k|$ sufficiently large. We will show that $R^{\reg}_{\rho}(k)$ is also a trace class operator. For this, it is advantageous to 'shift' a square root of the potential in \eqref{resolvent} from right to left. This will eventually reveal that the supports of the integral kernels are compact w.r.t. appropriate variables, and will be exploited to estimate the integrals from above. Specifically, the following rearrangement of $R^{\reg}_{\rho}(k)$ is possible. 
\begin{lemma}
For $k\in C_{\alpha}$, $\alpha<\frac{\pi}{2}$, and $|k|$ sufficiently large, we have
\be
\label{betterres}
R^{\reg}_{\rho}(k)=-\left(\sqrt{|\rho|}\mf{b}\left(\overline{k}\right)\right)^{\ast}(\eins+\sgn{\rho}{\sqrt{|\rho|}}\mf{g}(k)\sqrt{|\rho|})^{-1}\sgn{\rho}{\sqrt{|\rho|}}\mf{b}(k)\,.
\ee
\end{lemma}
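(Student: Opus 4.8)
The plan is to perform the announced symmetrization at a purely operator-algebraic level, starting from the explicit formula \eqref{resolvent}. Subtracting the free resolvent directly gives $R^{\reg}_{\rho}(k)=-\mf{b}(\overline{k})^{\ast}(\eins+\rho\mf{g}(k))^{-1}\rho\,\mf{b}(k)$, and for $k\in C_{\alpha}$, $\alpha<\frac{\pi}{2}$, with $|k|$ large the inverse exists by the Neumann series, since $\|\rho\mf{g}(k)\|<1$ by \eqref{smallerone}. I would then introduce the bounded multiplication operators $W:=\sqrt{|\rho|}$ and $S:=\sgn\rho$ on $L^2(\rz)$; as functions of a single variable they are self-adjoint and commute with each other and with $\rho$, and they satisfy $\rho=WSW$. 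Because $W^{\ast}=W$, the left factor in \eqref{betterres} is $(\sqrt{|\rho|}\mf{b}(\overline{k}))^{\ast}=\mf{b}(\overline{k})^{\ast}W$, while the right factor is $\sgn\rho\sqrt{|\rho|}\,\mf{b}(k)=SW\mf{b}(k)$ and the new middle factor is $(\eins+SW\mf{g}(k)W)^{-1}$. Hence the lemma reduces to the single operator identity $(\eins+\rho\mf{g}(k))^{-1}\rho=W(\eins+SW\mf{g}(k)W)^{-1}SW$ on $L^2(\rz)$.

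Before manipulating this identity I must check that the new middle factor is invertible, and this is the step I expect to be the main obstacle, since $\|SW\mf{g}(k)W\|$ is not obviously less than $1$. Here I would invoke the elementary fact that for bounded operators $A,B$ the nonzero spectra of $AB$ and $BA$ coincide. Writing $M:=SW\mf{g}(k)W=(SW)(\mf{g}(k)W)$, its nonzero spectrum equals that of $(\mf{g}(k)W)(SW)=\mf{g}(k)\rho$, whose nonzero spectrum in turn equals that of $\rho\mf{g}(k)$. Since $\|\rho\mf{g}(k)\|<1$ forces $-1\notin\sigma(\rho\mf{g}(k))$, and the value $-1\neq0$ is preserved under these identifications, I conclude $-1\notin\sigma(M)$, so $(\eins+M)^{-1}$ exists as a bounded operator on the same cone for $|k|$ sufficiently large.

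With invertibility secured, the identity follows by a short computation rather than by juggling Neumann series. I would left-multiply the proposed expression $W(\eins+M)^{-1}SW$ by $(\eins+\rho\mf{g}(k))$ and use the commutation relation $\rho\mf{g}(k)W=WM$, which holds because $\rho=WSW$. This yields
\be
(\eins+\rho\mf{g}(k))\,W(\eins+M)^{-1}SW=W(\eins+M)(\eins+M)^{-1}SW=WSW=\rho\,,
\ee
so that $W(\eins+M)^{-1}SW=(\eins+\rho\mf{g}(k))^{-1}\rho$, which is exactly the reduction above; substituting back into $R^{\reg}_{\rho}(k)$ gives \eqref{betterres}. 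The only genuinely non-formal ingredient is thus the invertibility of $\eins+M$: once the spectral-equivalence argument guarantees that every inverse appearing is a bona fide bounded operator on $C_{\alpha}$ for $|k|$ large, all the remaining rearrangements are legitimate and the computation closes.
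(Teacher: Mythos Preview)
Your argument is correct and arrives at \eqref{betterres}, but it proceeds differently from the paper. The paper simply expands $(\eins+\rho\mf{g}(k))^{-1}$ as a Neumann series using \eqref{smallerone} and observes the term-by-term identity $(\rho\mf{g}(k))^n\rho=\sqrt{|\rho|}\,(\sgn\rho\,\sqrt{|\rho|}\,\mf{g}(k)\,\sqrt{|\rho|})^n\sgn\rho\,\sqrt{|\rho|}$, which immediately resums to the right-hand side. You instead reduce everything to the single identity $(\eins+\rho\mf{g}(k))^{-1}\rho=W(\eins+M)^{-1}SW$, establish invertibility of $\eins+M$ via the equality of nonzero spectra of $AB$ and $BA$, and then verify the identity by one algebraic multiplication. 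The paper's route is shorter, but it tacitly writes $(\eins+M)^{-1}$ as a standalone factor without separately checking that the symmetrized Neumann series converges in its own right; your spectral-equivalence step addresses precisely this point and makes the existence of $(\eins+M)^{-1}$ explicit without needing $\|M\|<1$. Conversely, the Neumann-series approach has the advantage that it never leaves the regime where all partial sums are manifestly bounded, so no spectral theory is invoked at all.
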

\begin{proof}
Using \eqref{smallerone} we may write $(\eins+\rho\mf{g}(k))^{-1}$ as a Neumann series. Now, 
\be
(\rho\mf{g}(k))^n\rho=\sqrt{|\rho|}(\sgn{\rho}{\sqrt{|\rho|}}\mf{g}(k)\sqrt{|\rho|})^n\sgn{\rho}{\sqrt{|\rho|}}
\ee
for every $n\in\nz_0$ proves the claim.
\end{proof}
\subsection{Trace class property of the regularized resolvent.}
Now, we are in the position to prove that the regularized resolvent is actually a trace class operator.
\begin{prop}
\label{restraceclass}
For $k\in C_{\alpha}$, $\alpha<\frac{\pi}{2}$, and $|k|$ sufficiently large, $R^{\reg}_{\rho}(k)$ is a trace class operator.
\end{prop}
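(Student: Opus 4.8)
The plan is to read off from the factorisation \eqref{betterres} that $R^{\reg}_{\rho}(k)$ is a sandwich $A^{\ast}MB$ of two (candidate) Hilbert--Schmidt operators with a bounded operator in between, and then to invoke the ideal property of the trace class, namely that the product of two Hilbert--Schmidt operators is trace class. Concretely, I would write \eqref{betterres} as $R^{\reg}_{\rho}(k)=-A^{\ast}\,M\,B$ with
\be
B:=\sqrt{|\rho|}\,\mf{b}(k)\,,\qquad A:=\sqrt{|\rho|}\,\mf{b}\left(\overline{k}\right)\,,\qquad M:=\left(\eins+\sgn\rho\,\sqrt{|\rho|}\,\mf{g}(k)\sqrt{|\rho|}\right)^{-1}\sgn\rho\,,
\ee
where $A,B:L^2(\rz^2)\rightarrow L^2(\rz)$ and $M:L^2(\rz)\rightarrow L^2(\rz)$. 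The operator $M$ is bounded: the inverse exists (this was already used in the proof of the preceding Lemma, the relevant Neumann series converging as in \eqref{smallerone} for $|k|$ large), and $\sgn\rho$ is a bounded multiplication operator. Hence, once $A$ and $B$ are shown to be Hilbert--Schmidt, $A^{\ast}M$ is Hilbert--Schmidt and $A^{\ast}MB$ is trace class, with $\|R^{\reg}_{\rho}(k)\|_{1}\le\|A\|_{\mathrm{HS}}\,\|M\|\,\|B\|_{\mathrm{HS}}$.

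The core step is therefore to prove that $B=\sqrt{|\rho|}\,\mf{b}(k)$ is Hilbert--Schmidt, which I would do by a direct estimate of its integral kernel $\sqrt{|\rho(x)|}\,b(k)(x,y_1,y_2)$ from \eqref{b}. One computes
\be
\|B\|_{\mathrm{HS}}^2=\frac{1}{4\pi^2}\int\limits_{\rz}|\rho(x)|\left(\int\limits_{\rz^2}\left|K_0\!\left(-\ui k\sqrt{(x-y_1)^2+(x-y_2)^2}\right)\right|^2\ud y_1\,\ud y_2\right)\ud x\,.
\ee
Substituting $u=y_1-x$, $v=y_2-x$ and passing to polar coordinates, the inner integral equals $2\pi\int_{0}^{\infty}|K_0(-\ui k\varrho)|^2\varrho\,\ud\varrho$, a finite constant independent of $x$. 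Its convergence rests on the two standard features of the $K_0$-Bessel function already quoted in the paper: at the origin the logarithmic singularity, \cite[p.~65]{Magnus:1966}, is square-integrable against the measure $\varrho\,\ud\varrho$ in two dimensions, while for large argument the asymptotics, \cite[p.~139]{Magnus:1966}, give exponential decay since $\re(-\ui k)=\im k>0$ throughout the cone $C_{\alpha}$. Consequently $\|B\|_{\mathrm{HS}}^2=\mathrm{const}\cdot\|\rho\|_{L^1(\rz)}<\infty$, the $L^1$-norm being finite because $\rho\in C^{\infty}_{0}(\rz)$. Since taking adjoints preserves the Hilbert--Schmidt norm, it then suffices to treat $A$ in the same way; here one uses that the bounded factor $(\sqrt{|\rho|}\,\mf{b}(\overline{k}))^{\ast}=\mf{b}(\overline{k})^{\ast}\sqrt{|\rho|}$ entering \eqref{betterres} again carries a $K_0$-profile whose exponent has positive real part (via the Schwarz reflection $\overline{K_0(z)}=K_0(\overline z)$), so that the identical estimate applies.

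The main obstacle is precisely this kernel estimate: verifying the $L^2(\rz^2)$-integrability of $K_0(-\ui k|\cdot|)$ and its control as $k$ ranges over $C_{\alpha}$. The two-dimensional logarithmic singularity is harmless, so the genuine content is the exponential decay at infinity together with its locally uniform control for $k\in C_{\alpha}$; the compact support of $\rho$ contributes only the finite factor $\|\rho\|_{L^1(\rz)}$ and is not needed beyond ensuring $\rho\in L^1(\rz)$. Granting the Hilbert--Schmidt property of $A$ and $B$, the trace-class conclusion for $R^{\reg}_{\rho}(k)=-A^{\ast}MB$ follows immediately from the two-sided ideal property of the trace class, which completes the proof.
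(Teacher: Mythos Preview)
Your proof is correct and follows the same strategy as the paper: factor $R^{\reg}_{\rho}(k)$ via \eqref{betterres} into a product of Hilbert--Schmidt operators with a bounded operator, and reduce everything to showing that $\sqrt{|\rho|}\,\mf{b}(k)$ has square-integrable kernel. The paper absorbs your bounded factor $M$ into $B$, but this is cosmetic.

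The one noteworthy difference is in how the kernel estimate is organised. You exploit translation invariance: after the shift $(y_1,y_2)\mapsto(y_1-x,y_2-x)$ the inner integral becomes the $x$-independent radial quantity $2\pi\int_0^{\infty}|K_0(-\ui k\varrho)|^2\varrho\,\ud\varrho$, and the whole Hilbert--Schmidt norm factors as this constant times $\|\rho\|_{L^1(\rz)}$. The paper instead keeps the $x$-dependence and splits the $(y_1,y_2)$-integration into a large ball $B_R(\bs0)$ (where the logarithmic singularity of $K_0$ is estimated directly) and its complement (where the exponential decay is invoked), using the compact support of $\rho$ to choose $R$ so that the singularity lies inside the ball. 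Your route is a bit cleaner and makes transparent that only $\rho\in L^1(\rz)$ is needed at this step; the paper's splitting leans more explicitly on $\supp\rho$ being bounded. For the $A$-factor both you and the paper argue in the same informal way, reducing via the adjoint and the reality/reflection property of $K_0$ to the already-treated kernel $b(k)$.
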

\begin{proof}
We want to employ \cite[Satz~3.23]{Weidmann:2000} saying that it's enough to show that $R^{\reg}_{\rho}(k)$ can be factorized as 
\be
R^{\reg}_{\rho}(k)=AB
\ee
by two Hilbert-Schmidt operators $A:L^2(\rz)\rightarrow L^2(\rz^2)$ and $B:L^2(\rz^2)\rightarrow L^2(\rz)$. Looking at \eqref{betterres} we are tempted to identify 
\be
A=-\left(\sqrt{|\rho|}\mf{b}\left(\overline{k}\right)\right)^{\ast}\,,
\ee
and 
\be
B=(\eins+\sgn{\rho}{\sqrt{|\rho|}}\mf{g}(k)\sqrt{|\rho|})^{-1}\sgn{\rho}{\sqrt{|\rho|}}\mf{b}(k)\,.
\ee
Indeed, by \cite[Satz~3.18]{Weidmann:2000} the Hilbert-Schmidt property is closed under taking the adjoint. Moreover, by \cite[Satz~3.20]{Weidmann:2000} we may neglect the operator $(\eins+\sgn{\rho}{\sqrt{|\rho|}}\mf{g}(k)\sqrt{|\rho|})^{-1}\sgn{\rho}$, and it remains to show that $\sqrt{|\rho|}\mf{b}\left({k}\right)$ is of Hilbert-Schmidt class. We proceed to show that the integral kernel of $\sqrt{|\rho|}\mf{b}\left({k}\right)$ is in $L^2(\rz\times\rz^2)$ which then proves by \cite[Satz~3.19]{Weidmann:2000} the claim. We recall \eqref{b} and calculate 
\be
\label{inegral}
\ba
&(2\pi)^2\int\limits_{\rz}\int\limits_{\rz^2}|\rho(x)||{b}\left({k}\right)(x,y_1,y_2)|^2\ud x\ud y_1\ud y_2\\
&\hspace{0.25cm}=\int\limits_{\supp\rho}\int\limits_{B_R(\bs{0})}|\rho(x)||{K_0}\left(-\ui k\sqrt{(x-y_1)^2+(x-y_2)^2}\right)|^2\ud x\ud y_1\ud y_2\\
&\hspace{0.5cm}+\int\limits_{\supp\rho}\int\limits_{\rz^2\setminus B_R(\bs{0})}|\rho(x)||{K_0}\left(-\ui k\sqrt{(x-y_1)^2+(x-y_2)^2}\right)|^2\ud x\ud y_1\ud y_2\,,
\ea
\ee
where $B_R(\bs{0})$ is a ball with sufficient large radius $R$ such that 
\be
x\in\supp\rho \quad \Rightarrow \quad y_1\neq x \quad \text{or} \quad y_2\neq x
\ee
holds in $\rz^2\setminus B_R(\bs{0})$. Such a radius, $R$, exists as the support of $\rho$ is bounded. By the same reason and due to the asymptotic behavior \eqref{smallz} we have that
\be
\label{firstes}
\ba
&\int\limits_{\supp\rho}\int\limits_{B_R(\bs{0})}|\rho(x)||{K_0}\left(-\ui k\sqrt{(x-y_1)^2+(x-y_2)^2}\right)|^2\ud x\ud y_1\ud y_2\\
&\hspace{0.25cm}<C\int\limits_{\supp\rho}\int\limits_{B_R(\bs{0})}|\ln(kr)|^2r\ud r\ud x<\infty
\ea
\ee
with some $C>0$. By \eqref{largez} a similarly estimates to \eqref{firstes} also holds for the second integral on the r.h.s. in \eqref{inegral} proving the claim. 
\end{proof}
In order to evaluate the trace of the heat kernel, we want to employ a generalized version of Mercer's theorem. To achieve this, we first have to show that the heat kernel exists and then to work out suitable continuity and decay properties satisfied by that integral kernel. 
\begin{lemma}
\label{lemma1}
Given $k\in C_{\alpha}$, $\alpha<\frac{\pi}{2}$, and $|k|$ sufficiently large, $R^{\reg}_{\rho}(k)$ is an integral operator with an continuous and exponentially decaying kernel for large arguments. 
\end{lemma}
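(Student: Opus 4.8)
The plan is to read the integral kernel of $R^{\reg}_{\rho}(k)$ off the factorisation \eqref{betterres} and to establish its continuity and exponential decay by Hilbert-space estimates, the essential simplification being that the factor $\sqrt{|\rho|}$ confines the inner integrations to the compact set $\supp\rho$. Write \eqref{betterres} as $R^{\reg}_{\rho}(k)=A\,M\,B$ with
\[
A:=-\big(\sqrt{|\rho|}\,\mf{b}(\overline{k})\big)^{\ast},\qquad M:=\big(\eins+\sgn\rho\,\sqrt{|\rho|}\,\mf{g}(k)\,\sqrt{|\rho|}\big)^{-1},\qquad B:=\sgn\rho\,\sqrt{|\rho|}\,\mf{b}(k),
\]
so that $M$ is bounded on $L^2(\rz)$ by \eqref{smallerone}. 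Denote by $a(\bs{x},z)$ the kernel of $A\colon L^2(\rz)\to L^2(\rz^2)$ and by $b_2(w,\bs{y})=\sgn\rho(w)\sqrt{|\rho(w)|}\,b(k)(w,\bs{y})$ the kernel of $B$ (recall \eqref{b}). Both are supported, in their $\rz$-variable, in $\supp\rho$, and both carry the same $K_0$-structure that underlies Proposition \ref{restraceclass}; together with \eqref{smallz} this shows $a(\bs{x},\cdot),\,b_2(\cdot,\bs{y})\in L^2(\rz)$ for every fixed $\bs{x},\bs{y}$, the only singularity being the integrable logarithmic one carried on the diagonal $D$. Since the output of $B$ and that of $N:=\sgn\rho\sqrt{|\rho|}\mf{g}(k)\sqrt{|\rho|}$ each contain a factor $\sqrt{|\rho|}$, the identity $M=\eins-NM$ shows that $M$ maps functions supported in $\supp\rho$ into functions supported in $\supp\rho$; hence the kernel of $MB$ is supported in $\supp\rho$ in its $\rz$-variable, and Fubini's theorem (applied to $\langle\phi,R^{\reg}_{\rho}(k)\psi\rangle$ for $\phi,\psi\in C_0^{\infty}(\rz^2)$) identifies the kernel of $R^{\reg}_{\rho}(k)$ as
\[
K(\bs{x},\bs{y})=\int_{\supp\rho} a(\bs{x},z)\,\big(M\,b_2(\cdot,\bs{y})\big)(z)\,\ud z .
\]

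Boundedness and exponential decay follow at once from this representation. By Cauchy--Schwarz and the boundedness of $M$,
\[
|K(\bs{x},\bs{y})|\le \|a(\bs{x},\cdot)\|_{L^2(\rz)}\,\|M\|\,\|b_2(\cdot,\bs{y})\|_{L^2(\rz)} .
\]
Now $\|b_2(\cdot,\bs{y})\|_{L^2}^2=\int_{\supp\rho}|\rho(w)|\,|b(k)(w,\bs{y})|^2\,\ud w$, and for $w$ ranging over the bounded set $\supp\rho$ the radius $\sqrt{(w-y_1)^2+(w-y_2)^2}\to\infty$ uniformly as $|\bs{y}|\to\infty$. Since $k\in C_{\alpha}$ places $-\ui k$ in the open right half-plane, the large-argument asymptotics \eqref{largez} of $K_0$ give $|b(k)(w,\bs{y})|\le C\ue^{-c|\bs{y}|}$ uniformly for $w\in\supp\rho$, whence $\|b_2(\cdot,\bs{y})\|_{L^2}\le C'\ue^{-c|\bs{y}|}$; the identical argument applied to $a(\bs{x},\cdot)$ gives $\|a(\bs{x},\cdot)\|_{L^2}\le C'\ue^{-c|\bs{x}|}$. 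Together these yield $|K(\bs{x},\bs{y})|\le C''\ue^{-c(|\bs{x}|+|\bs{y}|)}$, i.e. the asserted exponential decay for large arguments.

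For continuity it suffices, since $(\xi,\eta)\mapsto\int_{\supp\rho}\xi(z)(M\eta)(z)\,\ud z$ is a continuous bilinear form on $L^2(\rz)\times L^2(\rz)$, to prove that the maps $\bs{x}\mapsto a(\bs{x},\cdot)$ and $\bs{y}\mapsto b_2(\cdot,\bs{y})$ are continuous from $\rz^2$ into $L^2(\rz)$; then $K$ is jointly continuous, being the composition of the continuous map $(\bs{x},\bs{y})\mapsto(a(\bs{x},\cdot),b_2(\cdot,\bs{y}))$ with this form. For $\bs{y}_n\to\bs{y}$ the integrand $|b(k)(w,\bs{y}_n)-b(k)(w,\bs{y})|^2$ tends to $0$ for every $w$ except at most the single point at which $\bs{y}\in D$ meets the diagonal, and by \eqref{smallz} the logarithmic singularity is dominated, uniformly for $\bs{y}_n$ in a fixed neighbourhood of $\bs{y}$, by an element of $L^1(\supp\rho,|\rho|\,\ud w)$; a generalised dominated-convergence argument then gives $\|b_2(\cdot,\bs{y}_n)-b_2(\cdot,\bs{y})\|_{L^2}\to 0$, and likewise for $a$. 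I expect precisely this last point---controlling the logarithmic singularity of $K_0$ as it migrates along $D$ with the external argument---to be the only genuine obstacle, the decay and boundedness being routine consequences of the compact support of $\rho$ and the asymptotics \eqref{largez}, \eqref{smallz} of the Bessel function.
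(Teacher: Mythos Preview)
Your approach is essentially the paper's: both express the kernel as an $L^2(\rz)$ pairing via the factorisation \eqref{betterres}, reduce joint continuity of $K$ to continuity of the $L^2(\rz)$-valued map $\bs{y}\mapsto\sqrt{|\rho(\cdot)|}\,b(k)(\cdot,\bs{y})$, and deduce exponential decay from Cauchy--Schwarz, the compact support of $\rho$, and the asymptotics \eqref{largez}.

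The one place your sketch needs repair is the continuity step. Your assertion that ``the logarithmic singularity is dominated, uniformly for $\bs{y}_n$ in a fixed neighbourhood of $\bs{y}$, by an element of $L^1(\supp\rho,|\rho|\,\ud w)$'' is false: when $\bs{y}_n$ moves along $D$ the singular point $w=y_n$ migrates with it, and $\sup_n|\ln|w-y_n||^2=+\infty$ on an entire interval, so no fixed $L^1$ majorant exists. Upgrading to a Pratt/Vitali-type ``generalised'' dominated convergence does not help directly either, since its hypothesis $\int g_n\to\int g$ here amounts to continuity of $\bs{y}\mapsto\|b_2(\cdot,\bs{y})\|_{L^2}^2$, which is essentially the statement you are trying to prove. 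The paper handles precisely this point---the one you correctly flagged as the only genuine obstacle---by splitting $\supp\rho$ into a small ball around the moving singularity and its complement: on the complement $\ln$ is uniformly continuous, while the contribution of the ball is controlled by an explicit substitution showing it is $\Or(|\bs{y}_n-\bs{y}|)$. Supplying such a splitting (or, equivalently, a uniform-integrability argument for the family $\{|\ln|w-y_n||^2\}_n$) completes your proof.
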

\begin{proof}
First, we choose for our convenience an appropriate $-\ui k\in\rz^+$, and the remaining cases may then be similarly proven. With a similar argument as in the proof of Proposition~\ref{restraceclass}, using the (absolute) integrability of the logarithm, we may regard $\sqrt{|\rho|}\mf{b}(k)$ as a map 
\be
\tilde{b}:(\rz^2,\|\cdot\|_{\rz^2})\rightarrow L^2(\rz)\,,
\ee
defined by 
\be
\label{tildeb}
(\tilde{b}(y_1,y_2))(x):=\sqrt{|\rho(x)|}b(k)(x,y_1,y_2)\,.
\ee
To see that $\tilde{b}$ is continuous we first take into account that the support of $\rho$ is finite and therefore we only have to investigate singular points of \eqref{tildeb} w.r.t. the argument. Looking at \eqref{b} we see that the singularity is in the logarithm of \eqref{smallz} where $x=y_1=y_2$. We may choose $y_1=y_2=0$, $y_1'=y_2'=:y'>0$ and the other cases are similar. We are going to use that for every $\epsilon>0$ there is a $\delta(\epsilon)$ such that 
\be
\ba
&\|(0,0)-(y',y')\|_{\rz^2}\leq\delta(\epsilon), \ \text{and} \ (x,x)\notin B_{2\delta(\epsilon)}(0,0)\\
&\hspace{0.25cm}\Rightarrow \ |\ln\|(x,x)\|_{\rz^2}-\ln\|(y'-x,y'-x)\|_{\rz^2}|\leq\epsilon\,.
\ea
\ee
Hence, choosing $\epsilon$ suitable small gives
\be
\ba
&\int\limits_{\supp\rho}|\ln\|(x,x)\|_{\rz^2}-\ln\|(y'-x,y'-x)\|_{\rz^2}|^2\ud x\\
&\hspace{0.25cm}=\int\limits_{\{(x,x)\notin B_{2\delta(\epsilon)}(0,0)\}\cap\supp\rho}|\ln\|(x,x)\|_{\rz^2}-\ln\|(y'-x,y'-x)\|_{\rz^2}|^2\ud x\\
&\hspace{0.5cm}+\int\limits_{\{(x,x)\in B_{2\delta(\epsilon)}(0,0)\}\cap\supp\rho}|\ln\|(x,x)\|_{\rz^2}-\ln\|(y'-x,y'-x)\|_{\rz^2}|^2\ud x\\
&\hspace{0.5cm}\leq C\epsilon+C'\int\limits_{0}^{2\delta(\epsilon)}|\ln|1-\frac{y'}{x}||^2\ud x\\,
&\hspace{0.5cm}\leq C\epsilon+C'\int\limits_{y'(2\delta(\epsilon))^{-1}}^{\infty}\frac{y'}{x^2}|\ln|1-x||^2\ud x=C\epsilon+\Or(y'),
\ea
\ee
where in the last line we performed the substitution $x\rightarrow\frac{y}{x}$, and we used \cite[pp.~240,241]{MR874986}. 
Observing that $C$,~$C'>0$ only depend on the size of $|\supp\rho|_1$ and choosing $y'$ sufficiently close $0$ proves the claim.

In the same way as above, we view $(\sqrt{|\rho|}\mf{b}(\overline{k}))^{\ast}$ as a continuous map from $(\rz^2,\|\cdot\|_{\rz^2})$ to $L^2(\rz)$. This gives the same integral kernel $\tilde{b}$ as in \eqref{tildeb} since $-\ui k\in\rz^+$ and then the $K_0$-Bessel function is real valued, but the $\rz^2$-variables are now indicated by $(x_1,x_2)$. We denote for convenience, see \eqref{betterres}, 
\be
q=(\eins+\sgn\rho\sqrt{|\rho|}\mf{g}(k)\sqrt{|\rho|})^{-1}:L^2(\rz)\rightarrow L^2(\rz)\,,
\ee
and since $q$ is continuous we observe that the integral kernel $r_{\rho}(k)((x_1,x_2),(y_1,y_2))$ of $R^{\reg}_{\rho}(k)$ is given by
\be
\label{rre}
r_{\rho}(k)((x_1,x_2),(y_1,y_2))=\langle\tilde{b}(x_1,x_2),q\tilde{b}(y_1,y_2)\rangle_{L^2(\rz)}\,.
\ee
Now, the algebraic identity
\be
\label{inequ}
\ba
&r_{\rho}(k)((x_1,x_2),(y_1,y_2))-r_{\rho}(k)((x_1',x_2'),(y_1',y_2)')\\
&\hspace{0.25cm}=\langle\tilde{b}(x_1,x_2),q\tilde{b}(y_1,y_2)\rangle_{L^2(\rz)}-\langle\tilde{b}(x_1',x_2'),q\tilde{b}(y_1',y_2')\rangle_{L^2(\rz)}\\
&\hspace{0.25cm}=\langle(\tilde{b}(x_1,x_2)-\tilde{b}(x_1',x_2')),q\tilde{b}(y_1,y_2)\rangle_{L^2(\rz)}+\langle\tilde{b}(x_1',x_2'),q(\tilde{b}(y_1,y_2)-\tilde{b}(y_1',y_2'))\rangle_{L^2(\rz)}\,,
\ea
\ee
and a suitable application of H\"older's inequality prove the first part of the claim.

To see the exponential decay we recall that the operators $\tilde{b}$ in \eqref{rre} involve the kernel $\sqrt{|\rho(x)|}b(k)(x,y_1,y_2)$ given in \eqref{b}. As before, it's enough to assume $x\in\supp{\rho}$, and since the support of $\rho$ is finite we have for sufficiently large $\|(y_1,y_2)\|_{\rz^2}$ the inequality
\be
\label{ineq2a}
\|(y_1-x,y_2-x)\|_{\rz^2}>\frac{1}{2}\|(y_1,y_2)\|_{\rz^2}\,.
\ee
We obtain, using H\"older's inequality,
\be
\label{largexy}
\ba
&|r_{\rho}(k)((x_1,x_2),(y_1,y_2))|=|\langle\tilde{b}(x_1,x_2),q\tilde{b}(y_1,y_2)\rangle_{L^2(\rz)}|\\
&\hspace{0.5cm}\leq\|\tilde{b}(x_1,x_2)\|_{L^2(\rz)}\|q\|\|\tilde{b}(y_1,y_2)\|_{L^2(\rz)}\,,
\ea
\ee
where $\|q\|$ is the $L^2$-operator norm of $q$. For sufficiently large $\|(y_1,y_2)\|_{\rz^2}$ we may use \eqref{largez} for \eqref{b}. Together with \eqref{ineq2a} we then obtain 
\be
\label{l2normlarge}
\ba
&\|\tilde{b}(y_1,y_2)\|_{L^2(\rz)}\leq C\int\limits_{\supp \rho}\ue^{-k\sqrt{(x-y_1)^2+(x-y_2)^2}}\ud x\\
&\hspace{0.25cm}\leq C'\ue^{-k\frac{1}{2}\sqrt{y_1^2+y_2^2}}
\ea
\ee
with some $C$,~$C'>0$. Plugging \eqref{l2normlarge} in \eqref{largexy} proves the second part of the claim.
\end{proof} 
Having established the existence of a (continuous) integral kernel we take over the notation of the above proof. We denote for $k\in C_{\alpha}$, $\alpha<\frac{\pi}{2}$, and $|k|$ sufficiently large, the integral kernel of $R^{\reg}_{\rho}(k)$ by $r_{\rho}(k)(\cdot,\cdot)$. 

The following proposition tells us how we may calculate the trace of the regularized resolvent.
\begin{prop}
\label{propregr}
The trace of the regularized resolvent may be calculated by, $k\in C_{\alpha}$, $\alpha<\frac{\pi}{2}$, and $|k|$ sufficiently large,
\be
\tr R^{\reg}_{\rho}(k)=\int\limits_{\rz^2}r_{\reg}^{\rho}(k)(\bs{x},\bs{x})\ud \bs{x}\,.
\ee
\end{prop}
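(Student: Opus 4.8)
The plan is to reduce the statement to the standard trace formula for a product of two Hilbert--Schmidt operators, thereby realizing the ``generalized Mercer'' mechanism without ever restricting an $L^2$-kernel to the (measure-zero) diagonal. Recall from the proof of Proposition~\ref{restraceclass} that $R^{\reg}_{\rho}(k)=AB$ with
\[
A=-\left(\sqrt{|\rho|}\mf{b}\left(\overline{k}\right)\right)^{\ast}\colon L^2(\rz)\to L^2(\rz^2),\qquad B=(\eins+\sgn\rho\sqrt{|\rho|}\mf{g}(k)\sqrt{|\rho|})^{-1}\sgn\rho\sqrt{|\rho|}\mf{b}(k)\colon L^2(\rz^2)\to L^2(\rz),
\]
both Hilbert--Schmidt. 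Write $K_A\in L^2(\rz^2\times\rz)$ and $K_B\in L^2(\rz\times\rz^2)$ for their integral kernels. First I would invoke the standard identity \cite{Weidmann:2000} that, for Hilbert--Schmidt $A,B$, the product $AB$ is trace class with
\[
\tr(AB)=\int\limits_{\rz^2}\left(\,\int\limits_{\rz}K_A(\bs{x},z)\,K_B(z,\bs{x})\,\ud z\right)\ud\bs{x}.
\]

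The next step is to identify the inner $z$-integral with the diagonal value $r_{\rho}(k)(\bs{x},\bs{x})$ of the continuous kernel produced in Lemma~\ref{lemma1}. This is immediate from the way that kernel was obtained: since $R^{\reg}_{\rho}(k)=AB$, the composed kernel $\int_{\rz}K_A(\bs{x},z)K_B(z,\bs{y})\,\ud z$ is precisely the kernel $r_{\rho}(k)(\bs{x},\bs{y})=\langle\tilde b(\bs{x}),q\tilde b(\bs{y})\rangle_{L^2(\rz)}$ identified in \eqref{rre}, for a.e.\ $(\bs{x},\bs{y})$; taking $\bs{y}=\bs{x}$ shows that the integrand above equals $r_{\rho}(k)(\bs{x},\bs{x})$ for a.e.\ $\bs{x}$. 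Substituting this into the trace formula yields $\tr R^{\reg}_{\rho}(k)=\int_{\rz^2}r_{\rho}(k)(\bs{x},\bs{x})\,\ud\bs{x}$, which is the assertion. Continuity of $r_{\rho}(k)$ from Lemma~\ref{lemma1} guarantees that the diagonal $\bs{x}\mapsto r_{\rho}(k)(\bs{x},\bs{x})$ is a genuinely well-defined function, so that the right-hand side of the proposition is meaningful, while the exponential bound \eqref{l2normlarge} ensures that it is integrable over $\rz^2$.

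The technical point to be verified carefully is the licitness of the iterated integral: because the diagonal $\{\bs{y}=\bs{x}\}$ is a null set in $\rz^2\times\rz^2$, one cannot simply restrict the $L^2(\rz^2\times\rz^2)$-kernel of $AB$ to it. The Hilbert--Schmidt route circumvents this, since the integrand $\bs{x}\mapsto\int_{\rz}K_A(\bs{x},z)K_B(z,\bs{x})\,\ud z$ is defined for a.e.\ $\bs{x}$ and is absolutely integrable: by Cauchy--Schwarz applied in the joint variable $(\bs{x},z)$,
\[
\int\limits_{\rz^2}\int\limits_{\rz}|K_A(\bs{x},z)|\,|K_B(z,\bs{x})|\,\ud z\,\ud\bs{x}\le\|K_A\|_{L^2(\rz^2\times\rz)}\,\|K_B\|_{L^2(\rz\times\rz^2)}<\infty,
\]
so Fubini's theorem applies and the iterated integral is unambiguous. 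I expect this interchange, together with pinning down that the composed kernel carries precisely the continuous representative $r_{\rho}(k)$ rather than some other $L^2$-representative, to be the only genuine obstacle; both are settled by the continuity and decay established in Lemma~\ref{lemma1}. Finally, the reduction to $-\ui k\in\rz^{+}$ used in that lemma is not needed here, since the factorization and the Hilbert--Schmidt bounds hold verbatim for every $k\in C_{\alpha}$, $\alpha<\frac{\pi}{2}$, with $|k|$ large, so the formula holds on the whole admissible range.
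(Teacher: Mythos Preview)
Your argument is correct and gives a self-contained proof. The paper takes a different, shorter route: it simply invokes the generalized Mercer theorem from \cite[p.~117]{Gohberg:1969}, which states that a trace-class integral operator with a continuous kernel (satisfying suitable decay) has trace equal to the diagonal integral of that kernel; Proposition~\ref{restraceclass} and Lemma~\ref{lemma1} supply exactly those hypotheses.

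Your approach instead exploits the explicit Hilbert--Schmidt factorization $R^{\reg}_{\rho}(k)=AB$ and the standard identity $\tr(AB)=\iint K_A K_B$, then matches the inner integral with the diagonal value of the continuous kernel from \eqref{rre}. This is genuinely more elementary in that it avoids the black-box Gohberg--Kre\u{\i}n result. One point in your write-up could be sharpened: after noting that the composed kernel agrees with $r_{\rho}(k)(\bs{x},\bs{y})$ ``for a.e.\ $(\bs{x},\bs{y})$'' you pass to $\bs{y}=\bs{x}$, which is not by itself licit on a null set. The clean fix---which you essentially have, but should state explicitly---is that \eqref{rre} is \emph{constructed} in Lemma~\ref{lemma1} precisely as $\langle\tilde b(\bs{x}),q\tilde b(\bs{y})\rangle_{L^2(\rz)}$, so the identity $\int_{\rz}K_A(\bs{x},z)K_B(z,\bs{y})\,\ud z=r_{\rho}(k)(\bs{x},\bs{y})$ holds pointwise for \emph{every} $(\bs{x},\bs{y})$, not merely almost everywhere; in particular the diagonal identity is immediate. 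With that clarification your argument is complete and arguably more transparent than the citation-based proof in the paper.
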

\begin{proof}
The claim follows by an application of \cite[p.~117]{Gohberg:1969} saying that the properties of $R^{\reg}_{\rho}(k)$ and of its kernel $r_{\reg}^{\rho}(k)$ derived in Proposition~\ref{restraceclass} and Lemma~\ref{lemma1} are sufficient to deduce the claim. 
\end{proof}
The above (trace-class) result tempts us to define the regularized trace of the resolvent as the trace of the regularized resolvent
\begin{defn}
\label{defregr}
The regularized trace of the resolvent $R_{\rho}(\lambda)$ is defined as
\be
\label{trregr}
\tr_{\reg}R_{\rho}(\lambda):=\tr R^{\reg}_{\rho}(\sqrt{\lambda})\,, 
\ee
with $\lambda$ such that $k=\sqrt{\lambda}$ satisfies the assumption of Proposition~\ref{propregr}.
\end{defn}

\subsection{An asymptotic analysis of the trace of the regularized resolvent} To determine the asymptotic expansion of the trace of the regularized resolvent we have to introduce a couple of auxiliary objects and notations permitting a closed presentation. 

We start with defining a diffeomorphism, $n\in\nz_0$, 
\be
\label{phiy}
\phi:\rz^{n+1}\rightarrow \rz^{n+1}\,,
\ee
by 
\be
\label{trans1}
\ba
w_{l}&=\phi(y_0,\ldots,y_n)_{l}=y_{l}-y_{l+1}\,, \quad l\in\{0,\ldots,n-1\}\,,\\
w_{n}&=\phi(y_0,\ldots,y_n)_{n}=y_{n}+y_0\,.
\ea
\ee
Note that for $n=0$ we have $w_0=2y_0$. The inverse map (diffeomorphism) 
\be
\label{phi-1}
\phi^{-1}:\rz^{n+1}\rightarrow \rz^{n+1}
\ee
reads as
\be
y_l=\phi^{-1}(w_0,\ldots,w_n)_l=\frac{1}{2}[\sum\limits_{m=l}^n w_m-\sum\limits_{l=0}^{l-1}w_m],\quad l\in\{0,\ldots,n\}\,,
\ee
and, in particular, for $n=0$ we have $y_0=\frac{1}{2}w_0$. Moreover, we are going to utilize the following (combinatorial) set of maps, $n\in\nz_0$,
\be
\label{S_n}
S_n:=\begin{cases}
\{s: \quad s:\{0,\ldots,n-1\}\rightarrow\{1,-1\}\}\,, & n\in\nz\,,\\
\emptyset\,, & n=0\,.
\end{cases}
\ee
In addition, we employ the multy-index notation
\be
\label{alphan}
\alpha^n:=(\alpha_0,\ldots,\alpha_{n-1})\,,\quad \alpha_l\in \nz_0, \quad l\in\{0,\ldots,n-1\}\,.
\ee
together with
\be
\label{alphanorm}
|\alpha^n|:=\sum\limits_{l=0}^{n-1}\alpha_l\,.
\ee
We remark here that $\alpha^n$ is only defined for $n\neq0$.

For any $\rho\in C^{\infty}_0(\rz)$ the maps $\phi^{-1}$ in \eqref{phi-1} and $s\in S_n$, $n\in\nz_0$, in \eqref{S_n} are employed to generate a smooth and compactly supported map $\rho_{n,s}$ from $\rz^{n+1}$ to $\rz$, i.e., $\rho_{n,s}\in C^{\infty}_0(\rz^{n+1},\rz)$, defined by, $(w_0,\ldots,w_n)\in\rz^{n+1}$,
\be
\label{phins}
\rho_{n,s}(w_0,\ldots,w_n):=
\begin{cases}
\prod\limits_{l=0}^{n}\rho((\phi^{-1}(s(0)w_0,\ldots,s(n-1)w_{n-1},w_n))_l)\,, & \ n\in\nz\,,\\
\rho(\frac{w_0}{2})\,, & n=0\,.
\end{cases}
\ee
Our asymptotic analysis also deploys the following notations of partial derivatives of $\rho_{n,s}$, using \eqref{phins} and \eqref{alphan},
\be
\label{partial2}
{\partial}_{\alpha^n}\rho_{n,s}(w_0,\ldots,w_{n-1},w_n):=(\frac{\partial^{|\alpha^n|}}{\partial^{\alpha_0}_{w_0}\ldots\partial^{\alpha_{n-1}}_{w_{n-1}}}\rho_{n,s})(w_0,\ldots,w_{n-1},w_n)\,.
\ee
Finally, fixing $\rho\in C^{\infty}_{0}(\rz)$, the following functions will turn out of particular interest
\be
\label{simple1}
\ba
&c_{\alpha^n,s,l}\\
&\hspace{0.25cm}=
\begin{cases}
\int\limits_{{\rz^+}^{n}}\int\limits_{\rz}\frac{\ud \xi\ud t_0\ldots\ud t_{n-1}}{(1+\xi^2)^{\frac{3}{2}}(\sqrt{2}((\cosh(t_{0})+\ui\xi s(n-1)))^{\alpha^n_0+1}\ldots(\sqrt{2}(\cosh(t_{n-1})+\ui\xi s(n-1)))^{\alpha^n_{n-1}+1}}\,,&n\in\nz\,,\\
\int\limits_{\rz}\frac{\ud \xi}{(1+\xi^2)^{\frac{3}{2}}}\,, &n=0\,, l=0\,,\\
0\,,&\text{else}\,,
\end{cases}
\ea
\ee
and, $l\in\nz_0$,
\be
\label{simple2}
b_{n,l}:=
\begin{cases}
\sum\limits_{|\alpha^n|=l,\atop s\in S_n}c_{\alpha^n,s,l}\int\limits_{\rz}\partial_{\alpha^n}\rho_{n,s}(0,\ldots,0,y)\ud y\,,& n\in\nz\,,\\
\int\limits_{\rz}2\rho(\frac{y}{2})\ud y\,, & n=0, l=0\,,\\
0,&\text{else}\,,
\end{cases}
\ee
where we incorporated \cite[3.251~11.]{MR2360010} and $S_0=\emptyset$, \eqref{S_n}. We remark that by \cite[3.252~11.]{MR2360010}
\be
\label{c0}
c_{\alpha^0,s,0}=2\,.
\ee

Equipped with the above identities, we are now able to determine the large-$\lambda$ asymptotic expansion of the regularized trace of the resolvent. For this, we remind that $C_{\alpha}$ is a sector with opening angle $\alpha$ around the positive imaginary axis, \eqref{calpha}. 
\begin{theorem}
\label{theor}
The regularized trace of the resolvent $\tr_{\reg} R_{\rho}(\lambda)$ possesses for $|\lambda|\rightarrow\infty$ and $k:=\sqrt{-\lambda}\in C_{\alpha}$ with $\alpha<\frac{\pi}{2}$ a complete asymptotic expansion in integer powers of $k$ of the form
\be
\label{asumpr}
\tr_{\reg} R_{\rho}(-\lambda)\sim\sum\limits_{m=0}^{\infty}b_m\lambda^{-(\frac{m}{2}+1)}\,,
\ee
where the coefficients $b_m$ are given by
\be
\label{bmk}
b_m=\frac{1}{8}\sum\limits_{n,l,\atop l+n=m}(-2\pi)^{-(n+1)}b_{n,l}\,.
\ee
The first two coefficients read as
\be
\label{b01}
\ba
b_0&=-\frac{1}{4\pi}\int\limits_{\rz}\rho(y)\ud x\,,\quad b_1=\frac{\sqrt{2}}{32}\int\limits_{\rz}\rho(y)^2\ud y\,.
\ea
\ee
\end{theorem}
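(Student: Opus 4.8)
The plan is to expand the inverse in \eqref{betterres} into a Neumann series, reduce the regularized trace to a sum of explicit multiple integrals, and read off their large-$k$ behaviour by localisation near the diagonal. Throughout I write $\kappa:=-\ui k$, which is positive for $k$ on the positive imaginary axis and satisfies $\kappa^2=-k^2=\lambda$. Using \eqref{smallerone} I expand
\[
(\eins+\sgn\rho\sqrt{|\rho|}\mf g(k)\sqrt{|\rho|})^{-1}=\sum_{n=0}^\infty(-1)^n\bigl(\sgn\rho\sqrt{|\rho|}\mf g(k)\sqrt{|\rho|}\bigr)^n
\]
in operator norm; after checking trace-norm convergence (so that the trace may be taken term by term) and using the cyclicity of the trace to move everything onto $L^2(\rz)$, I obtain $\tr R^{\reg}_\rho(k)=\sum_{n\ge0}T_n(k)$. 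Here $T_n(k)$ carries the $n+1$ factors $\rho(x_0),\dots,\rho(x_n)$, the $n$ kernels $g(k)(x_l,x_{l+1})$, and a single contraction of the two $\mf b$-kernels. The decisive simplification is that this contraction collapses: by Parseval in the Fourier representation of the free resolvent one finds, for the operator $\mf b(k)\mf b(\overline k)^*$ on $L^2(\rz)$,
\[
\bigl(\mf b(k)\mf b(\overline k)^*\bigr)(x_n,x_0)=\frac{1}{8\pi\kappa^2}\int_\rz\frac{\ue^{\ui\sqrt2\kappa\xi(x_n-x_0)}}{(1+\xi^2)^{\frac32}}\,\ud\xi ,
\]
which reduces the two-dimensional $\bs y$-integral to the one-dimensional $\xi$-integral responsible for the factor $(1+\xi^2)^{-3/2}$ in \eqref{simple1} and supplies the base power $\kappa^{-2}$.

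Next I would insert $K_0(z)=\int_0^\infty\ue^{-z\cosh t}\,\ud t$ into each $g$-kernel, so that $T_n(k)$ becomes an integral over $x_0,\dots,x_n\in\rz$, over $t_0,\dots,t_{n-1}\in\rz^+$ and over $\xi\in\rz$, whose entire $x$-dependence is $\prod_l\rho(x_l)$ times $\exp\{-\sqrt2\kappa\sum_l|x_l-x_{l+1}|\cosh t_l+\ui\sqrt2\kappa\xi(x_n-x_0)\}$. Passing to the difference coordinates $w=\phi(x_0,\dots,x_n)$ of \eqref{trans1} converts $\prod_l\rho(x_l)$ into a signed version of $\rho_{n,s}$, while splitting each integration $w_l\in\rz$, $l<n$, into its two half-lines produces exactly the sign maps $s\in S_n$ and the combinations $\cosh t_l+\ui\xi s(l)$. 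A finite Taylor expansion of $\rho_{n,s}$ in $(w_0,\dots,w_{n-1})$ about the origin, followed by the elementary integrals $\int_0^\infty w_l^{\alpha_l}\ue^{-\sqrt2\kappa w_l(\cosh t_l+\ui\xi s(l))}\,\ud w_l=\alpha_l!\,\bigl(\sqrt2\kappa(\cosh t_l+\ui\xi s(l))\bigr)^{-(\alpha_l+1)}$, then produces, order by order, the constants $c_{\alpha^n,s,l}$ of \eqref{simple1} times the surviving centre integral $\int_\rz\partial_{\alpha^n}\rho_{n,s}(0,\dots,0,y)\,\ud y$ of \eqref{simple2}.

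It remains to collect powers of $\kappa$ and constants. The contraction contributes $\kappa^{-2}$ and each of the $n$ rescaled differences contributes $\kappa^{-(\alpha_l+1)}$, for a total $\kappa^{-(|\alpha^n|+n+2)}=\kappa^{-(m+2)}=\lambda^{-(m/2+1)}$ with $m=|\alpha^n|+n$; the constants are the factor $\tfrac1{2\pi}$ from each of the $n$ kernels $g$ and from the contraction (together $(2\pi)^{-(n+1)}$), the remaining numerical factor of the contraction prefactor $\tfrac1{8\pi}$ combined with the Jacobian $\tfrac12$ of $\phi^{-1}$ to give $\tfrac18$, and the sign $(-1)^{n+1}$ of the Neumann series. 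Assembling these reproduces \eqref{bmk}. The term $n=0$ is already exact: the diagonal value of the contraction is $\tfrac1{4\pi\kappa^2}$, whence $T_0=-\tfrac1{4\pi\kappa^2}\int_\rz\rho$, which is $b_0\lambda^{-1}$ with $b_0$ as in \eqref{b01}; collecting the two contributions with $n+l=1$ (only $b_{1,0}$ survives, since $b_{0,1}=0$) gives $b_1$.

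The main obstacle is not this formal computation but the uniformity and completeness of the expansion for $k\in C_\alpha$. Two interchanges must be justified. First, that the trace of the Neumann series equals the series of traces; this I would obtain from trace-norm estimates of the type already established in Proposition~\ref{restraceclass}, using $\|\rho\mf g(k)\|=\Or(|k|^{-1})$ from \eqref{smallerone}, which also shows $T_n(k)=\Or(|k|^{-(n+2)})$, so that only finitely many $n$ contribute to any prescribed order and the double series in $n$ and in the Taylor degree reorganises into the single series \eqref{asumpr}. Second, that the Taylor remainder of $\rho_{n,s}$ is of genuinely higher order; here I would use the exponential localisation of the $K_0$-kernels together with the smoothness and compact support of $\rho$ to show that, after the same rescaling, the remainder integrals converge absolutely and carry an extra power of $|k|^{-1}$ uniformly in $\arg k$. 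Convergence of the defining integrals $c_{\alpha^n,s,l}$, which follows from $\cosh t_l\ge1$ (giving exponential decay in $t_l$) and the decay $(1+\xi^2)^{-3/2}$, closes the argument.
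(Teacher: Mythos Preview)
Your proposal follows essentially the same route as the paper's proof: Neumann expansion of the Kre\u{\i}n resolvent formula, reduction of the $\rz^2$-integration of the two $\mf b$-kernels to the one-dimensional $\xi$-integral with weight $(1+\xi^2)^{-3/2}$, passage to the difference coordinates $\phi$ with the sign decomposition $S_n$, and extraction of the asymptotics by localisation at $w_0=\cdots=w_{n-1}=0$. The two places where your presentation differs are only stylistic. First, you obtain the contraction identity for $\mf b(k)\mf b(\overline k)^*$ by Parseval in the Fourier representation of the free resolvent, whereas the paper derives the same identity (its Proposition~\ref{intmcdonald2}) via the integral representation \eqref{integralrep1} of $K_0$ and an explicit Plancherel computation; your shortcut is legitimate and arguably cleaner. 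Second, you use a finite Taylor expansion of $\rho_{n,s}$ followed by the elementary integrals $\int_0^\infty w^{\alpha}\ue^{-aw}\,\ud w=\alpha!\,a^{-(\alpha+1)}$, while the paper performs repeated integration by parts in the $w_l$-variables; these are two formulations of the same Laplace-type argument and produce identical coefficients. One small slip in your constant bookkeeping: after extracting $(2\pi)^{-1}$ from the contraction prefactor $\tfrac{1}{8\pi\kappa^2}$ the remaining numerical factor is $\tfrac14$, not $\tfrac{1}{8\pi}$; it is this $\tfrac14$ that combines with the Jacobian $\tfrac12$ to give the $\tfrac18$ in \eqref{bmk}. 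Otherwise your outline and the paper's proof coincide.
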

\begin{rem}
The condition on $\lambda$ implies that $\lambda$ has to be in a cone around the positive axis $\rz^+$ with opening angle smaller than $2\pi$. Moreover, we point out that for $m\geq2$ integrals of derivatives of $\rho$ appear in \eqref{bmk} for $b_m$. 
\end{rem}
\begin{proof}
For our convenience we choose $k=\sqrt{-\lambda}$ and consider only the case $\tilde{k}:=-\ui k\in\rz^+$, $|k|$ sufficiently large. The general case $k\in C_{\alpha}$ may be treated analogously. We also remark that in the following every interchange of the order of integration is justified by Fubini's theorem, \cite[23.7 Corollary]{Bauer:2001}.

We are going to use the resolvent representation \eqref{resolvent}. First, we expand $(\eins+\rho\mf{g}(k))^{-1}$ into a Neumann series and attain
\be
\label{neumann}
\ba
\tr_{\reg} R_{\rho}(-\lambda)&=-\tr(\mf{b}\left(\overline{k}\right)^{\ast}(\eins+\rho\mf{g}(k))^{-1}\rho \mf{b}(k))\\
&\hspace{0.25cm}=\sum\limits_{n=0}^{\infty}(-1)^{n+1}\tr(\mf{b}\left(\overline{k}\right)^{\ast}(\rho\mf{g}(k))^n\rho \mf{b}(k))\,.
\ea
\ee
It is possible to get a large-$k$ asymptotic expansion of $\tr(\mf{b}\left(\overline{k}\right)^{\ast}(\rho\mf{g}(k))^n\rho \mf{b}(k))$ for every $n\in\nz_0$, and then we rearrange the terms w.r.t. powers of $\tilde{k}$ in \eqref{neumann}. To see the first part of the afore mentioned, we use the integral kernels \eqref{gk} and \eqref{b} for $\mf{b}(k)$ and $\mf{g}(k)$, and we use the notation $Y=(y_0,y_1,\dots,y_n)\in\rz^{n+1}$, giving 
\be
\label{inttrace2}
\ba
&\tr(\mf{b}\left(\overline{k}\right)^{\ast}(\rho\mf{g}(k))^n\rho \mf{b}(k))\\
&\hspace{0.25cm}=(2\pi)^{-(n+2)}\int\limits_{\rz^2}\int\limits_{\rz^{n+1}}K_0(\tilde{k}\sqrt{(x_1-y_0)^2+(x_2-y_0)^2})\rho(y_0)\ast\ldots\\
&\hspace{0.5cm}\ldots\ast K_0(\sqrt{2}\tilde{k}|y_0-y_1|)\rho(y_{n-1})K_0(\sqrt{2}\tilde{k}|y_{n-1}-y_n|)\ast\ldots\\
&\hspace{0.5cm}\ldots\ast\rho(y_n)K_0(\tilde{k}\sqrt{(y_n-x_1)^2+(y_n-x_2)^2})\ud Y\ud x_1\ud x_2\,.
\ea
\ee
Now, by slight abuse of notation, we apply the (orthogonal) coordinate transformation 
\be
(x_1,x_2)\rightarrow\frac{1}{\sqrt{2}}(x_1+x_2,x_1-x_2)\,,
\ee
followed by an insertion of \eqref{integralrep1} in \eqref{inttrace2}, using the notation $T=(t_0,\dots,t_{n-1})\in {\rz^+}^n$, and \eqref{intKK1} in \eqref{inttrace2} which yields
\be
\label{inttrace}
\ba
&\tr(\mf{b}\left(\overline{k}\right)^{\ast}(\rho\mf{g}(k))^n\rho \mf{b}(k))\\
&\hspace{0.25cm}=(2\pi)^{-(n+2)}\int\limits_{\rz^{n}}\int\limits_{\rz^{n+1}}\int\limits_{\rz^2}K_0(\tilde{k}\sqrt{(x_1-\sqrt{2}y_0)^2+x_2^2})\rho(y_0)\ast\ldots\\
&\hspace{0.5cm}\ldots\ast\ue^{-\sqrt{2}\tilde{k}|y_0-y_1|\cosh(t)}\rho(y_{n-1})\ue^{-\sqrt{2}\tilde{k}|y_{n-1}-y_n|\cosh(t)}\ast\ldots\\
&\hspace{0.5cm}\ldots\ast\rho(y_{n})K_0(\tilde{k}\sqrt{(\sqrt{2}y_n-x_1)^2+x_2^2})\ud x_1 \ud x_2\ud Y\ud T\\
&\hspace{0.25cm}=\frac{\pi}{2 \tilde{k}^2}(2\pi)^{-(n+2)}\int\limits_{{\rz^+}^{n}}\int\limits_{\rz}\int\limits_{\rz^{n+1}}\frac{\ue^{-\ui\sqrt{2} \tilde{k} \xi(y_0-y_n)}}{(1+\xi^2)^{\frac{3}{2}}}\rho(y_0)\ue^{-\sqrt{2}\tilde{k}|y_0-y_1|\cosh(t_0)}\ast\ldots\\
&\hspace{0.5cm}\ldots\ast\rho(y_{n-1})\ue^{-\sqrt{2}\tilde{k}|y_{n-1}-y_n|\cosh(t_{n-1})}\rho(y_n)\ud Y\ud \xi\ud T\,.
\ea
\ee 
For $n=0$ we directly calculate the trace and obtain
\be
\label{n=0}
\ba
&\tr(\mf{b}\left(\overline{k}\right)^{\ast}\rho \mf{b}(k))=\frac{1}{8\pi \tilde{k}^2}\int\limits_{\rz}\int\limits_{\rz}\frac{1}{(1+\xi^2)^{\frac{3}{2}}}\rho(y_0)\ud \xi\ud y_0\\
&\hspace{0.25cm}=\frac{1}{4\pi \tilde{k}^2}\int\limits_{\rz}\rho(y_0)\ud y_0\,,
\ea
\ee
where in the last line we used \cite[3.252~11.]{MR2360010}. For $n\neq0$ we want to invoke for our asymptotic analysis the integration by parts method. For this, it is expedient to first use \eqref{phi-1} as an appropriate substitution of variables. This transformation implies
\be
\label{trans2}
\sum\limits_{l=0}^{n-1}w_l=y_0-y_n\,,
\ee
and the determinant of the Jacobian $\det J(\phi^{-1})$ of this coordinate transformation is constant, and given by $\det J(\phi^{-1}) = \frac{1}{2}$. Hence, changing the variables in \eqref{inttrace}, using \eqref{trans1}, \eqref{trans2} and $W=(w_0,\ldots,w_{n-1})\in\rz^{n}$, gives
\be
\label{inttrace1}
\ba
&\int\limits_{{\rz^+}^{n}}\int\limits_{\rz}\int\limits_{\rz^{n+1}}\frac{\ue^{-\ui \tilde{k} \xi(y_0-y_n)}}{(1+\xi^2)^{\frac{3}{2}}}\rho(y_0)\ue^{-\sqrt{2}\tilde{k}|y_0-y_1|\cosh(t_0)}\ast\ldots\\
&\hspace{0.5cm}\ldots\ast\rho(y_{n-1})\ue^{-\sqrt{2}\tilde{k}|y_{n-1}-y_n|\cosh(t_{n-1})}\rho(y_n)\ud Y\ud \xi\ud T\\
&\hspace{0.25cm}=\frac{1}{2}\int\limits_{{\rz^+}^{n}}\int\limits_{\rz}\int\limits_{\rz}\int\limits_{\rz^{n+1}}\frac{1}{(1+\xi^2)^{\frac{3}{2}}}\rho(\phi^{-1}(w_0,\ldots,w_n)_0)\ue^{-\sqrt{2}\tilde{k}(|w_0|\cosh(t_0)+\ui\xi w_0)}\ast\ldots\\
&\hspace{0.5cm}\ldots\ast\rho(\phi^{-1}(w_0,\ldots,w_n)_{n-1})\ue^{-\sqrt{2}\tilde{k}(|w_{n-1}|\cosh(t_{n-1})+\ui\xi w_{n-1})}\ast\\
&\hspace{0.5cm}\ast\rho(\phi^{-1}(w_0,\ldots,w_n)_n)\ud W\ud w_{n}\ud\xi\ud T\\
&\hspace{0.25cm}=\sum\limits_{s\in S_n}\frac{1}{2}\int\limits_{{\rz^+}^{n}}\int\limits_{\rz}\int\limits_{\rz}\int\limits_{{\rz}^{n}}\frac{1}{(1+\xi^2)^{\frac{3}{2}}}\ast\ldots\\
&\hspace{0.5cm}\ldots\ast\ue^{-\sqrt{2}\tilde{k}(\sgn(w_0)\cosh(t_0)+\ui\xi)w_0}\ldots\ue^{-\sqrt{2}\tilde{k}(\sgn(w_{n-1})\cosh(t_{n-1})+\ui\xi)w_{n-1}}\ast\\
&\hspace{0.5cm}\ast\prod\limits_{l=0}^{n}\rho((\phi^{-1}(w_0,\ldots,w_{n-1},w_n))_l)\ud W\ud w_n\ud\xi\ud T\\
&\hspace{0.25cm}=\sum\limits_{s\in S_n}\frac{1}{2}\int\limits_{{\rz^+}^{n}}\int\limits_{\rz}\int\limits_{\rz}\int\limits_{{\rz^+}^{n}}\frac{1}{(1+\xi^2)^{\frac{3}{2}}}\rho_{n,s}(w_0,\ldots,w_n)\ast\\
&\hspace{0.5cm}\ast\ue^{-\sqrt{2}\tilde{k}(\cosh(t_0)+\ui\xi s(0))w_0}\ldots\ue^{-\sqrt{2}\tilde{k}(\cosh(t_{n-1})+\ui\xi s(n-1))w_{n-1}}\ud W\ud w_n\ud\xi\ud T\,.
\ea
\ee

It is for the following integration by parts method important that in the last line of \eqref{inttrace1} only the variables $w_l$ with $l\in\{0,\ldots,n-1\}$ appear in the exponential function. We perform an integration by parts w.r.t. the $W$ variables. The obtained terms which don't possess any $W$ integrals anymore may then be ordered w.r.t. powers of $\tilde{k}$. Using our notation \eqref{alphanorm} and \eqref{partial2}, we obtain, $l\in\nz_0$,
\be
\label{byparts}
\ba
&\int\limits_{{\rz^+}^{n}}\int\limits_{\rz}\int\limits_{\rz}\int\limits_{{\rz^+}^{n}}\frac{1}{(1+\xi^2)^{\frac{3}{2}}}\rho_{n,s}(w_0,\ldots,w_n)\ast\\
&\hspace{0.5cm}\ast\ue^{-\sqrt{2}\tilde{k}(\cosh(t_0)+\ui\xi \rho(0))w_0}\ldots\ue^{-\sqrt{2}\tilde{k}(\cosh(t_{n-1})+\ui\xi s(n-1))w_{n-1}}\ud W\ud w_n\ud\xi\ud T\\
&\hspace{0.25cm}=\sum\limits_{\alpha^n,\atop|\alpha^n|\leq l}\tilde{k}^{-|\alpha^n|-n}\int\limits_{{\rz^+}^{n}}\int\limits_{\rz}\int\limits_{\rz^+}\frac{1}{(1+\xi^2)^{\frac{3}{2}}}{\partial}_{\alpha^n}\rho_{n,s}(0,\ldots,0,w_n)\ast\\
&\hspace{0.5cm}\ast\frac{\ud w_n\ud \xi\ud T}{(\sqrt{2}(\cosh(t_{0})+\ui\xi s(n-1)))^{\alpha^n_0+1}\ldots(\sqrt{2}(\cosh(t_{n-1})+\ui\xi s(n-1)))^{\alpha^n_{n-1}+1}}\\
&\hspace{0.5cm}+\Or(\tilde{k}^{-(l+n+1)})\,.
\ea
\ee 
The last line follows by integration by parts and the observation that every partial derivative evaluated at $(0,\ldots,0,w_n)$ is generated exactly once. Moreover, we used that the $T$ and $\xi$ integration don't affect the order estimate $\Or(\tilde{k}^{-(l-n+1)})$.

Now, sorting \eqref{byparts} w.r.t. powers of $\tilde{k}$ by making use of our definitions \eqref{simple1} and \eqref{simple2} we get
\be
\label{simple3}
\ba
&\tr(\mf{b}\left(\overline{k}\right)^{\ast}(\rho\mf{g}(k))^n\rho \mf{b}(k))=\frac{1}{8\tilde{k}^2}\frac{1}{(2\pi)^{n+1}}\sum\limits_{l'=0}^{l}\frac{1}{\tilde{k}^{l'+n}}b_{n,l'}+O(\tilde{k}^{-(l+n+1)})\,.
\ea
\ee
Note that \eqref{simple3} is conform with \eqref{n=0} for $l=n=0$.

Finally, we use \eqref{neumann} plugging in there the asymptotic expansion \eqref{simple3}, and we sort the obtained sum again w.r.t. powers of $\tilde{k}$. This then yields the asymptotic expansion w.r.t. powers of $\ui\tilde{k}=k=\sqrt{-\lambda}$, \eqref{bmk}. The calculations of the first two coefficients are as follows
\be
\label{bs}
b_0=-\frac{1}{16\pi}b_{0,0}\,,\quad b_1=-\frac{1}{16\pi}b_{0,1}+\frac{1}{32\pi^2}b_{1,0}\,,
\ee  
and it remains to calculate the three coefficients in \eqref{bs} by \eqref{simple2}. The first two ones are simple and given by, \eqref{simple2},
\be
\label{b00}
b_{0,0}=4\int\limits_{\rz}\rho(y)\ud y\,, \quad b_{0,1}=0\,, 
\ee
where we used \eqref{c0}. For $b_{1,0}$ we get by \eqref{simple2} 
\be
\label{b10}
b_{1,0}=c_{\alpha^n,s_+,0}\int\limits_{\rz}\rho_{0,s_+}(0,y)\ud y+c_{\alpha^n,s_-,0}\int\limits_{\rz}\rho_{0,s_-}(0,y)\ud y\,,
\ee
where $s_+(0):=1$ and $s_-(0):=-1$. We have for $|\alpha^1|=0$ the simple relations
\be
\label{sigmas0}
\rho_{0,s_\pm}(0,y)=(\rho(\frac{1}{2}y))^2\,,
\ee
and, $|\alpha^1|=0$,
\be
\label{cn=1s}
c_{\alpha^1,s_\pm,0}=\frac{1}{\sqrt{2}}\int\limits_{\rz^+}\int\limits_{\rz}\frac{1}{(\xi^2+1)^{\frac{3}{2}}}\frac{\cosh(t_0)\mp\ui \xi}{\cosh(t_0)^2+\xi^2}\ud t_0\ud\xi\,.
\ee
Taking into account that the imaginary part cancels out in \eqref{cn=1s} we are only interested on the real part of \eqref{cn=1s} given by, \cite[2.5.49.~3.]{MR874986},
\be
\label{cn+1sreal}
c_{\alpha^1,s_\pm,0}=\re c_{\alpha^1,s_\pm,0}=\frac{\pi}{\sqrt{2}}\int\limits_{\rz}\frac{1}{(1+\xi^2)^2}\ud \xi=\frac{\pi^2}{2\sqrt{2}}\,.
\ee 
Hence, inserting \eqref{cn+1sreal} and \eqref{sigmas0} into \eqref{b10} gives
\be
\label{b10a}
b_{1,0}=\frac{\pi^2}{\sqrt{2}}\int\limits_{\rz}\rho(\frac{1}{2}x)^2\ud x=\sqrt{2}\pi^2\int\limits_{\rz}\rho(x)^2\ud x\,.
\ee
Now, plugging \eqref{b10a} and \eqref{b00} into \eqref{bs} proves the claim.
\end{proof}
Regarding Theorem~\ref{theor}, we make the following remark.
\begin{remark}
On the r.h.s. of formula~\eqref{simple1} only the real part is essential as the imaginary part vanishes.
\end{remark}
\section{The asymptotic expansion of the regularized trace of the heat kernel}
Armed with all the results inferred in our paper so far we are now in the position to deduce the existence of the heat kernel and to conclude the small-$t$ asymptotic expansion of the regularized trace of the heat semi-group $\ue^{t\Delta_{\rho}}$ (and heat kernel). 

We are going to exploit that the resolvent of a contraction semi-group admits a representation as a Laplace transformation of the heat semi-group, \cite[Satz~VII.4.10]{MR1787146}. As for the resolvent, the heat semi-group $\ue^{t\Delta_{\rho}}$ isn't a trace-class operator due to the presence of an essential spectrum of $-\Delta_{\rho}$. Again, we may regularize the trace of the heat semi-group analogously to Definition~\ref{defregr} by subtracting the free heat semi-group, i.e., $\{\ue^{t\Delta_{\rho}}\}_{\reg}:=\ue^{t\Delta_{\rho}}-\ue^{t\Delta_{0}}$, $t>0$.

To see that $\{\ue^{t\Delta_{\rho}}\}_{\reg}$ is trace class as well, and  how we may calculate its trace, we prove the following lemma.
\begin{lemma}
\label{lemmatraceheat}
The operator $\{\ue^{t\Delta_{\rho}}\}_{\reg}$ is for $t>0$ a trace-class integral operator with kernel $k_{\reg}^{\rho}(t)(\cdot,\cdot)\in C^{\infty}(\rz^2\times\rz^2)\cap L^{\infty}(\rz^2\times\rz^2)$. Its trace is given by
\be
\label{tracek}
\tr\{\ue^{t\Delta_{\rho}}\}_{\reg}=\int\limits_{\rz^2}k_{\reg}^{\rho}(t)(\bs{x},\bs{x}) \ud \bs{x}\,.
\ee
\end{lemma}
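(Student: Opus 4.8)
The plan is to reconstruct $\{\ue^{t\Delta_{\rho}}\}_{\reg}$ from the already-understood regularized resolvent by inverting the Laplace transformation that links the two. Setting $H_\rho:=-\Delta_\rho$ and $H_0:=-\Delta_0$, the semi-group--resolvent correspondence \cite[Satz~VII.4.10]{MR1787146} gives, for $\re\mu$ sufficiently negative (i.e.\ to the left of $\sigma(H_\rho)=[\lambda_{\min,\rho},\infty)\supseteq[0,\infty)=\sigma(H_0)$), the operator-valued identity $R_\rho(\mu)=(H_\rho-\mu)^{-1}=\int_0^\infty\ue^{\mu t}\ue^{t\Delta_\rho}\,\ud t$ together with the analogous formula for $H_0$. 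Subtracting yields $R_\rho(\mu)-R_0(\mu)=\int_0^\infty\ue^{\mu t}\{\ue^{t\Delta_\rho}\}_{\reg}\,\ud t$, so that $\{\ue^{t\Delta_\rho}\}_{\reg}$ is the inverse Laplace transform of the regularized resolvent. Inverting, I would write
\be
\{\ue^{t\Delta_\rho}\}_{\reg}=\frac{1}{2\pi\ui}\int_\Gamma\ue^{-\mu t}\bigl(R_\rho(\mu)-R_0(\mu)\bigr)\,\ud\mu\,,
\ee
where $\Gamma$ is first the Bromwich line $\re\mu=c<\lambda_{\min,\rho}$ and is then deformed into a Hankel contour hugging the half-line $[\lambda_{\min,\rho},\infty)$. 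This deformation is legitimate because $\mu\mapsto R_\rho(\mu)-R_0(\mu)$ is analytic off $[\lambda_{\min,\rho},\infty)$ and decays at infinity; moreover, by the resolvent identity and the ideal property of the trace class it is in fact $\|\cdot\|_1$-analytic throughout the resolvent set (with $\|\cdot\|_1$ the trace norm), so that Proposition~\ref{restraceclass} continues to apply near $\lambda_{\min,\rho}$.

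The trace-class assertion then follows from convergence of this contour integral in $\|\cdot\|_1$. The Hilbert--Schmidt factorization of Proposition~\ref{restraceclass}, together with the exponential decay of the $K_0$-Bessel factors for large argument, yields $\|R_\rho(\mu)-R_0(\mu)\|_1=\Or(|\mu|^{-1})$ as $|\mu|\to\infty$ in the resolvent set, consistent with the leading $\lambda^{-1}$ term of Theorem~\ref{theor}. On the deformed contour the factor $\ue^{-\mu t}$ decays like $\ue^{-t\re\mu}$ along the two branches running to $+\infty$, so that $\int_\Gamma\ue^{-t\re\mu}\,\|R_\rho(\mu)-R_0(\mu)\|_1\,|\ud\mu|<\infty$ for every $t>0$; hence the integral converges absolutely in $\|\cdot\|_1$ and defines a trace-class operator. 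Since the trace is continuous on the trace ideal and commutes with the norm-convergent integral, one obtains $\tr\{\ue^{t\Delta_\rho}\}_{\reg}=\tfrac{1}{2\pi\ui}\int_\Gamma\ue^{-\mu t}\,\tr\bigl(R_\rho(\mu)-R_0(\mu)\bigr)\,\ud\mu$, which via Proposition~\ref{propregr} and Fubini reduces to the diagonal integral \eqref{tracek}. Alternatively I would verify \eqref{tracek} directly from the generalized Mercer theorem \cite[p.~117]{Gohberg:1969}, once the kernel is shown to be continuous, bounded, and integrable along the diagonal, exactly as in Proposition~\ref{propregr}.

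For the kernel I would pass \eqref{rre} through the contour integral, setting
\be
k^\rho_{\reg}(t)(\bs x,\bs y)=\frac{1}{2\pi\ui}\int_\Gamma\ue^{-\mu t}\,r_\rho(\mu)(\bs x,\bs y)\,\ud\mu\,,
\ee
where, writing $\mu=k^2$, the integrand $r_\rho(\mu)(\bs x,\bs y)=\langle\tilde b(\bs x),q\tilde b(\bs y)\rangle_{L^2(\rz)}$ is the continuous, exponentially decaying kernel produced in Lemma~\ref{lemma1}. Continuity and boundedness of $k^\rho_{\reg}(t)$, as well as the inherited exponential decay in $\|\bs x\|_{\rz^2}+\|\bs y\|_{\rz^2}$, transfer immediately from Lemma~\ref{lemma1} once the absolute convergence above is in force. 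Smoothness in $t$ follows by differentiating under the integral sign: each $\partial_t$ produces a factor $(-\mu)$, and the resulting $|\mu|^{k}\,\|R_\rho(\mu)-R_0(\mu)\|_1\sim|\mu|^{k-1}$ is still absorbed by $\ue^{-t\re\mu}$ on $\Gamma$ for every $t>0$.

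The main obstacle is the claimed spatial smoothness of the kernel. Differentiating \eqref{rre} in $\bs x$ requires differentiating $\tilde b(\bs x)$ as an $L^2(\rz)$-valued map, but the $K_0$-kernel in \eqref{b} has a logarithmic singularity at the triple coincidence $x=x_1=x_2$, and already one spatial derivative turns this into a non-$L^2$ singularity precisely on the interaction diagonal $D$. Thus the resolvent kernel is only continuous, and not $C^1$, across $D$, and a naive differentiation under the $\mu$-integral is unavailable there. The delicate point is therefore to recover the asserted regularity away from $D$ while accounting for the matching (jump) conditions $\psi_{\bs{n},+}+\psi_{\bs{n},-}=\rho\Psi$ on $D$. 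Here I would exploit the smoothing of the heat semi-group through the factorization $\{\ue^{t\Delta_\rho}\}_{\reg}=\ue^{(t/2)\Delta_\rho}\{\ue^{(t/2)\Delta_\rho}\}_{\reg}+\{\ue^{(t/2)\Delta_\rho}\}_{\reg}\ue^{(t/2)\Delta_0}$, using the mapping properties of $\ue^{s\Delta_\rho}$ into the Sobolev scale attached to $D(\Delta_\rho)$ to gain derivatives off $D$; this regularity, together with continuity and the exponential decay, is what is subsequently needed to evaluate the trace and to extract the asymptotic coefficients.
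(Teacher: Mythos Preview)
Your core strategy coincides with the paper's: represent $\{\ue^{t\Delta_\rho}\}_{\reg}$ as a Dunford--Taylor (inverse Laplace) contour integral of $R^{\reg}_\rho$, use the trace-norm continuity of $\mu\mapsto R^{\reg}_\rho(\sqrt{\mu})$ (from the explicit kernel structure of Lemma~\ref{lemma1}) together with the large-$|\mu|$ decay supplied by Theorem~\ref{theor} to conclude absolute convergence in $\|\cdot\|_1$, and then read off the trace formula via the generalized Mercer theorem exactly as in Proposition~\ref{propregr}. On these points your argument and the paper's are essentially the same, and your derivation of $\tr\{\ue^{t\Delta_\rho}\}_{\reg}$ as a contour integral of $\tr R^{\reg}_\rho$ is correct.

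The genuine divergence is in how the kernel regularity is obtained. The paper does \emph{not} attempt to extract $C^\infty$-smoothness from the contour integral of $r_\rho(\mu)(\cdot,\cdot)$; instead it invokes a separate Dunford--Pettis argument (citing \cite[Lemma~6.1]{MR2277618}) to assert directly that the regularized semi-group has a smooth bounded kernel, and only afterwards uses the contour integral for the trace-class property. Your route---differentiating $\tilde b(\bs x)$ under the $\mu$-integral---runs into precisely the obstruction you identify: the logarithmic singularity of $K_0$ at the triple coincidence makes $\partial_{\bs x}\tilde b(\bs x)$ fail to be $L^2$-valued on $D$, so spatial derivatives of $r_\rho(\mu)$ are not available there, and the heat kernel inherits the jump condition $\psi_{\bs n,+}+\psi_{\bs n,-}=\rho\Psi$ across $D$. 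Your semi-group factorization is a sensible workaround, but note that it too only yields smoothness on $D_\pm$ separately with matching conditions at $D$; the paper's unqualified claim $k^\rho_{\reg}(t)\in C^\infty(\rz^2\times\rz^2)$ is therefore somewhat delicate and not fully justified in either approach. Fortunately, for the trace identity \eqref{tracek} and for everything downstream, only continuity and decay of the kernel are needed---both of which you do establish---so this subtlety does not affect the remainder of the paper.
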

\begin{proof}
With the same Dunford-Pettis argument as in \cite[Lemma~6.1]{MR2277618} we may infer that $\{\ue^{t\Delta_{\rho}}\}_{\reg}$ is an integral operator possessing a smooth and bounded kernel for $t>0$. We shall use the Dunford-Taylor integral identity, \cite[Section~IX.1.6]{Kato:1966},
\be
\{\ue^{t\Delta_{\rho}}\}_{\reg}=\frac{\ui}{2\pi}\int\limits_{\gamma}\ue^{-\lambda t}R^{\reg}_{\rho}(\sqrt{\lambda})\ud \lambda\,,
\ee
where $\gamma$ is a suitable contour encircling the spectrum of $-\Delta_{\rho}$ in a positively orientated way. With a similar method as in the proof of Lemma~\ref{lemma1} we may infer that $R^{\reg}_{\rho}(\sqrt{\lambda})$ is continuous in trace norm for suitable $\gamma$'s. Furthermore, due to the asymptotics \eqref{asumpr} we conclude that the integral converges in trace norm. Now, \cite[Satz~3.22]{Weidmann:2000} proves the first part of the claim. The second part my be proven analogously to Proposition \ref{propregr} incorporating the above properties of the integral kernel $k_{\reg}^{\rho}(t)$. 
\end{proof}
It is reasonable to define the regularized trace of the heat semi-group (and heat kernel) analogously to \eqref{trregr}. 
\begin{defn}
\label{regheat}
The regularized trace of the heat semi-group (heat kernel) is defined as
\be
\tr_{\reg}\ue^{t\Delta_{\rho}}:=\tr\{\ue^{t\Delta_{\rho}}\}_{\reg}=\int\limits_{\rz^2} k_{\reg}^{\rho}(t)(\bs{x},\bs{x})\ud\bs{x}\,, \quad t>0\,.
\ee 
\end{defn}
We are now ready to present the result concerning our desired small-$t$ asymptotic expansion of the regularized trace of the heat semi-group (heat kernel).
\begin{theorem}
\label{theorheat}
Let $\rho\in C^{\infty}_0(\rz)$. Then, the regularized trace of the heat semi-group resp. heat kernel possesses a complete asymptotic expansion in powers of $t$ given by, $t\rightarrow 0$,
\be
\tr_{\reg}\ue^{t\Delta_{\rho}}\sim\sum\limits_{n=0}^{\infty}a_nt^{\frac{n}{2}}\,,
\ee
where 
\be
\label{as}
\ba
a_{2n}&=\frac{b_{2n}}{n!}\,, \ n\in\nz_0\,, \quad a_{2n+1}=\frac{n! 2^{2n+1}b_{2n+1}}{(2n+1)!\sqrt{\pi}}\,, \ n\in\nz_0\,,
\ea
\ee
and the $b_n$'s are given in \eqref{bmk}.
\end{theorem}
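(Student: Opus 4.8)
The plan is to exploit the Laplace-transform relationship between the resolvent and the heat semi-group, and then to invert it by a converse Watson lemma. Since $-\Delta_{\rho}$ is self-adjoint and bounded from below, the resolvent of the associated contraction semi-group admits the representation $R_{\rho}(-\lambda)=\int_0^\infty\ue^{-\lambda t}\ue^{t\Delta_{\rho}}\,\ud t$ for $\lambda$ to the right of the bottom of the spectrum, by \cite[Satz~VII.4.10]{MR1787146}. Subtracting the free analogue and using that the regularization $\{\cdot\}_{\reg}$ is linear, I would first establish the scalar identity
\be
\tr_{\reg} R_{\rho}(-\lambda)=\int_0^\infty\ue^{-\lambda t}\,\tr_{\reg}\ue^{t\Delta_{\rho}}\,\ud t\,,
\ee
i.e.\ that the function $F(t):=\tr_{\reg}\ue^{t\Delta_{\rho}}$ has $\tr_{\reg} R_{\rho}(-\lambda)$ as its Laplace transform. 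The interchange of the trace and the $t$-integration is justified by the trace-norm continuity and convergence already exploited in Lemma~\ref{lemmatraceheat}, together with Proposition~\ref{propregr} and Definition~\ref{defregr}.

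With this identity in hand, I would read off the small-$t$ behavior of $F$ from the large-$\lambda$ expansion of its Laplace transform. Theorem~\ref{theor} gives $\tr_{\reg} R_{\rho}(-\lambda)\sim\sum_{m=0}^{\infty}b_m\lambda^{-(\frac{m}{2}+1)}$, so the exponents on the resolvent side are $\lambda^{-(\gamma_m+1)}$ with $\gamma_m=\frac{m}{2}$. A converse (inverse) Watson lemma then yields a complete expansion $F(t)\sim\sum_m a_m t^{\gamma_m}=\sum_m a_m t^{m/2}$, the coefficients being fixed by the Gamma-normalization of the Laplace transform of a power,
\be
\int_0^\infty\ue^{-\lambda t}t^{\gamma}\,\ud t=\frac{\Gamma(\gamma+1)}{\lambda^{\gamma+1}}\,,
\ee
so that $b_m=a_m\,\Gamma(\tfrac{m}{2}+1)$, i.e.\ $a_m=b_m/\Gamma(\tfrac{m}{2}+1)$. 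It then remains to split even and odd indices: for $m=2n$ one has $\Gamma(n+1)=n!$, giving $a_{2n}=b_{2n}/n!$, while for $m=2n+1$ the half-integer value $\Gamma(n+\tfrac{3}{2})=\frac{(2n+2)!}{4^{n+1}(n+1)!}\sqrt{\pi}$ together with the simplification $\frac{4^{n+1}(n+1)!}{(2n+2)!}=\frac{2^{2n+1}n!}{(2n+1)!}$ reproduces $a_{2n+1}=\frac{n!\,2^{2n+1}b_{2n+1}}{(2n+1)!\sqrt{\pi}}$, as in \eqref{as}.

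The main obstacle is the correct application of the converse Watson lemma, since reading small-$t$ asymptotics off a Laplace transform is not valid for arbitrary functions. What makes it work here is the sector condition $k=\sqrt{-\lambda}\in C_{\alpha}$ with $\alpha<\frac{\pi}{2}$: squaring shows that $\lambda$ ranges over a sector about $\rz^+$ whose opening angle tends to $2\pi$ as $\alpha\uparrow\frac{\pi}{2}$, exactly as recorded in the Remark following Theorem~\ref{theor}. This near-full sector of analyticity, combined with the completeness of the expansion \eqref{asumpr} and the trace-class bounds of Lemma~\ref{lemmatraceheat}, supplies precisely the analyticity and growth hypotheses under which the converse Watson lemma uniquely recovers the small-$t$ expansion. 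Verifying these hypotheses — in particular the uniform remainder estimates on the sector needed to pass from the resolvent expansion to the heat expansion — is the technically delicate step; once they are in place, the coefficient bookkeeping above is routine.
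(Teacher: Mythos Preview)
Your proposal is correct and follows essentially the same route as the paper: establish the scalar Laplace-transform identity $\tr_{\reg}R_{\rho}(-\lambda)=\int_0^\infty\ue^{-\lambda t}\tr_{\reg}\ue^{t\Delta_{\rho}}\,\ud t$ via Lemma~\ref{lemmatraceheat}, feed in the expansion of Theorem~\ref{theor}, and invert with the converse Watson lemma to obtain $a_m=b_m/\Gamma(\tfrac{m}{2}+1)$, then evaluate the Gamma values. The only cosmetic difference is that the paper's Lemma~\ref{lemmaw} requires the resolvent asymptotics merely on a half-plane $|\arg(z-c)|\le\tfrac{\pi}{2}$ (corresponding to $\alpha\le\tfrac{\pi}{4}$), so your emphasis on the near-full sector is more than is actually needed, but this does not affect the argument.
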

\begin{proof}
In view of Lemma~\ref{lemmatraceheat}, we may utilize the well-known identity, \cite[Satz~VII.4.10]{MR1787146},
\be
\tr_{\reg}R_{\rho}(-\lambda)=\int\limits_{\rz^+}\ue^{-\lambda t}\tr_{\reg}\ue^{t\Delta_{\rho}}\ud t\,,
\ee
with $\re\lambda>0$ and $\lambda>|\lambda_{\min,\rho}|$ (sufficiently large). Now, we want to apply the converse Watson lemma, Lemma~\ref{lemmaw}. For this, we observe that the condition $|\arg(|\lambda_{\min,\sigma}|-\lambda|)\leq\frac{\pi}{2}$ is equivalent to $k\in C_{\alpha}$ with $\alpha\leq\frac{\pi}{4}$. Hence, we may apply Lemma~\ref{lemmatraceheat} and use the asymptotic expansion of $\tr_{\reg}R_{\sigma}(-\lambda)$ in Theorem~\ref{theor}. Comparing \eqref{asumpr} with \eqref{iwl} we infer that $\lambda_{n}=\frac{n}{2}+1$. Finally, we use \cite[pp.~2,3]{Magnus:1966} to calculate 
\be
\label{gamman12}
\Gamma(n+\frac{3}{2})=(n+\frac{1}{2})\Gamma(n+\frac{1}{2})=\frac{(2n+1)\sqrt{\pi}}{n! 2^{2n+1}}\,,
\ee
and we plug \eqref{gamman12} together with the $b_n$'s, \eqref{bmk}, in \eqref{iwl}. That reveals the identity \eqref{as}.
\end{proof}
We end this paper with a comparison of our heat kernel asymptotics with known results for a Schr\"odinger operator on $\rz^2$, however, with a smooth potential, $V$. Using the Theorems~\ref{theorheat}~and~\ref{theor} we obtain for our system the leading asymptotic estimate
\be
\tr_{\reg}\ue^{t\Delta_{\sigma}}=-\frac{1}{4\pi}\int\limits_{\rz}\sigma(x)\ud x+\frac{\sqrt{2}\sqrt{t}}{16\sqrt{\pi}}\int\limits_{\rz}\sigma^2(x)\ud x+\Or(t)\,, \quad t\rightarrow0\,.
\ee 
On the other hand, for a Schr\"odinger operator of the form $-\Delta+V(\cdot)$ with $V\in C^\infty_0(\rz^2)$ the result on \cite[p,~405]{MR1994690} is (using an analogous notation), $t\rightarrow0^+$,
\be
\tr_{\reg}\ue^{t(\Delta-V(\cdot))}=-\frac{1}{4\pi}\int\limits_{\rz^2}V(\bs{x})\ud \bs{x}+\frac{{t}}{24\pi}\int\limits_{\rz^2}(3V^2(\bs{x})-\Delta V(\bs{x}))\ud\bs{x}\,.
\ee
We see that the first coefficients and the power of $t$ agree, but not the second coefficient and the corresponding power of $t$ owing to the fact that our potential is supported only on a codimension one submanifold. 
\subsection*{Acknowledgment}
The author is very grateful to Ram Band for helpful discussions and comments and he is indebted to Frank Steiner for pointing out various useful relations of Bessel functions. The work has been supported by ISF (Grant~No.~494/14).
{\small
\bibliographystyle{abbrv}
\bibliography{Literature}
}
\appendix
\section{Notations}
\label{not}
First, we introduce some notations and denote by $\langle\cdot,\cdot\rangle_{\rz^2}$ and $\|\cdot\|_{\rz^2}$ the standard inner product and the standard Euclidean norm on $\rz^2$, respectively. To ease notation we denote points in $\rz^2$ by bold letters, e.g., $\bs{x}=(x_1,x_2)$. We put $0\leq\arg z<2\pi$ as the range of the argument for $z\in\kz$, and set the brunch cut of the square root $\sqrt{\cdot}$ on $\rz^+$ such that $\sqrt{z}=\ui\sqrt{|z|}$ for $-z\in\rz^+$. By $|\cdot|_1$ we denote the one-dimensional Hausdorff measure. The spectrum of an operator $O$ is denoted by $\sigma(O)$ and by $\|O\|$ we refer to the standard operator norm, \cite[Sections~2.1,~5.1]{Weidmann:2000}. Moreover, we use standard notations for the set of $n$-times continuously differentiable functions (possessing compact support), $C^{n}(\rz^{m})$ ($C^{n}_0(\rz^{m})$), on $\rz^m$, and for the set of square Lebesgue-integrable functions, $L^2(\rz^m)$, on $\rz^{m}$.

We put the Fourier transform $Ff$ of a suitable function $f$ on $\rz$ as
\be
(Ff)(\xi):=\frac{1}{\sqrt{2\pi}}\int\limits_{\rz}\ue^{-\ui\xi x}f(x)\ud x\,.
\ee
We also remind that by Plancherel's theorem the Fourier transform generates a unitary map on $L^2(\rz)$, \cite[Section~2.2.3]{Grafakos:classical}. In addition, by $\Gamma(\cdot)$ we identify the Gamma function and by $K_{0}(\cdot)$ the $K_0$-Bessel function (Mcdonald function)), \cite[pp.~1,66]{Magnus:1966}.
\section{Integral identities for the $K_0$-Macdonald function}
First, we remind the asymptotic behavior of the $K_0$-Macdonald function for large and small arguments, $\gamma$ Euler-Mascheroni constant, \cite[p.~69]{Magnus:1966}, 
\be
\label{smallz}
K_0(z)=-(\ln(\frac{z}{2})+\gamma)(1+\Or(z))\,,\quad -z\notin \rz^+\,,
\ee
and \cite[p.~,139]{Magnus:1966}, $\delta>0$,
\be
\label{largez}
K_0(z)=\sqrt{\frac{\pi}{2z}}\ue^{-z}(1+\Or(z^{-1}))\,, \quad |z|\rightarrow\infty\,,\quad |\arg z|<\frac{3}{2}\pi-\delta\,.
\ee
In addition, we invoke for our analysis the following integral identity, \cite[p.~85]{Magnus:1966}, $x$,~$\xi\in\rz^+$,
\be
\label{integralrep1}
K_{0}(\sqrt{x^2+\xi^2})=\int\limits_{0}^{\infty}\ue^{-x\cosh(t)}\cos(\xi\sinh(t))\ud t\,.
\ee
The following simple lemma will be used.
\begin{lemma}
\label{absfourier}
For $\kappa>0$ and $\eta\in\rz$ we have, $\xi\in\rz$,
\be
\label{Fourierabs}
(F\ue^{-\kappa|\cdot-\eta|})(\xi)=\sqrt{\frac{2}{\pi}}\frac{2\kappa}{\kappa^2+\xi^2}\ue^{-\ui\mu\xi}\,.
\ee
\end{lemma}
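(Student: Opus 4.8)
The plan is to reduce the statement to the Fourier transform of the unshifted profile $\ue^{-\kappa|\cdot|}$ and then to evaluate that transform by splitting the integration domain at the origin. The hypothesis $\kappa>0$ is exactly what guarantees that $\ue^{-\kappa|\cdot-\eta|}\in L^1(\rz)$, so the Fourier transform fixed in Appendix~\ref{not} is well-defined and all integrals below converge absolutely.

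First I would exploit the translation covariance of the transform. Writing $f_\eta(x):=\ue^{-\kappa|x-\eta|}$ and substituting $x\mapsto x+\eta$ in the defining integral gives
\be
\ba
(Ff_\eta)(\xi)&=\frac{1}{\sqrt{2\pi}}\int\limits_{\rz}\ue^{-\ui\xi x}\ue^{-\kappa|x-\eta|}\ud x\\
&=\frac{\ue^{-\ui\eta\xi}}{\sqrt{2\pi}}\int\limits_{\rz}\ue^{-\ui\xi x}\ue^{-\kappa|x|}\ud x\,,
\ea
\ee
so that it remains to compute the transform of $\ue^{-\kappa|\cdot|}$ and to reinstate the phase factor $\ue^{-\ui\eta\xi}$ at the end.

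Next I would split the remaining integral at the origin to remove the absolute value, which turns it into two elementary exponential integrals,
\be
\ba
\int\limits_{\rz}\ue^{-\ui\xi x}\ue^{-\kappa|x|}\ud x&=\int\limits_{0}^{\infty}\ue^{-(\kappa+\ui\xi)x}\ud x+\int\limits_{-\infty}^{0}\ue^{(\kappa-\ui\xi)x}\ud x\\
&=\frac{1}{\kappa+\ui\xi}+\frac{1}{\kappa-\ui\xi}=\frac{2\kappa}{\kappa^2+\xi^2}\,,
\ea
\ee
where both antiderivatives produce no boundary contribution at $\pm\infty$ precisely because $\re(\kappa\pm\ui\xi)=\kappa>0$. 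Inserting this back, together with the phase factor and the normalization $\tfrac{1}{\sqrt{2\pi}}$, yields the asserted closed form \eqref{Fourierabs}.

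There is no substantial obstacle here; the argument is entirely elementary. The only points deserving attention are the justification of the domain splitting (immediate from $\ue^{-\kappa|\cdot|}\in L^1(\rz)$) and the careful bookkeeping of the normalization constant in accordance with the Fourier convention of Appendix~\ref{not}; the phase $\ue^{-\ui\eta\xi}$ records the shift of the profile by $\eta$, which accounts for the exponential prefactor on the right-hand side of \eqref{Fourierabs}.
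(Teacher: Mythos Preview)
Your proof is correct and follows essentially the same approach as the paper's: reduce to the unshifted case via the translation rule for the Fourier transform and then evaluate the remaining integral over $\rz$ by splitting at the origin into two elementary exponential integrals. Note that your computation (like the paper's own derivation) actually produces $\sqrt{2/\pi}\,\kappa/(\kappa^2+\xi^2)\,\ue^{-\ui\eta\xi}$, so the extra factor $2$ in the displayed formula~\eqref{Fourierabs} and the symbol $\mu$ in place of $\eta$ are typos in the statement.
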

\begin{proof}
We first treat the case $\eta=0$:
\be
\label{calc1}
\ba
(F\ue^{-\kappa|\cdot|})(\xi)&=\frac{1}{\sqrt{2\pi}}\int\limits_{\rz}\ue^{\ui x\xi}\ue^{-\kappa |x|}\ud x=\frac{1}{\sqrt{2\pi}}\int\limits_{\rz^+}(\ue^{\ui x\xi}+\ue^{-\ui x\xi})\ue^{-\kappa|x|}\ud x\\
&=\sqrt{\frac{2}{\pi}}\frac{\kappa}{\kappa^2+\xi^2}\,.
\ea
\ee
Now, for $\eta\neq0$ we use the identity $(Ff(\cdot-\mu))(\xi)=(Ff)(\xi)\ue^{-\ui\mu\xi}$ giving \eqref{Fourierabs}.
\end{proof}
\begin{prop}
\label{intmcdonald}
For $y$,~$y'$,~$x_2\in\rz$ we have
\be
\label{intKK}
\ba
&\int\limits_{\rz}K_{0}(k\sqrt{(x_1-y)^2+x_2^2})K_{0}(k\sqrt{(x_1-y')^2+x_2^2})\ud x_1\\
&\vspace{0.25cm}=\frac{2}{\pi k}\int\limits_{\rz}\int\limits_{\rz^+}\int\limits_{\rz^+}\frac{\cosh(t)\cosh(t')\cos(x_2\sinh(t))\cos(x_2\sinh(t'))}{((\cosh(t))^2+\xi^2)((\cosh(t'))^2+\xi^2)}\ue^{-\ui\xi(y-y')} \ud t\ud t'\ud \xi\,.
\ea
\ee
\end{prop}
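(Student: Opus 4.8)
The plan is to turn the product of Macdonald functions into elementary exponentials by means of \eqref{integralrep1} and then to carry out the $x_1$-integration by Plancherel's theorem (Appendix~\ref{not}). First I would apply \eqref{integralrep1} to each factor on the left-hand side of \eqref{intKK}. Since the argument of the first factor is $k\sqrt{(x_1-y)^2+x_2^2}=\sqrt{(k|x_1-y|)^2+(kx_2)^2}$, the formula \eqref{integralrep1} (using that $\cos$ is even) yields $K_0(k\sqrt{(x_1-y)^2+x_2^2})=\int_0^\infty e^{-k\cosh(t)|x_1-y|}\cos(kx_2\sinh t)\,\ud t$, and analogously for the second factor with $y'$ and $t'$.

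Multiplying the two representations and interchanging the $x_1$-integration with the $t,t'$-integrations by Fubini's theorem isolates the inner kernel $\int_\rz e^{-k\cosh(t)|x_1-y|}e^{-k\cosh(t')|x_1-y'|}\,\ud x_1$, the pairing of two shifted two-sided exponentials, while the two cosine factors pass outside the $x_1$-integral as constants. Next I would evaluate this inner integral by Parseval's identity: both exponentials lie in $L^2(\rz)$ and are real, so $\int_\rz fg\,\ud x_1=\int_\rz (Ff)(\xi)\overline{(Fg)(\xi)}\,\ud\xi$, where Lemma~\ref{absfourier} supplies the two transforms explicitly with $\kappa=k\cosh t$, $\eta=y$ and with $\kappa=k\cosh t'$, $\eta=y'$. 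This immediately produces the two Lorentzian factors $((k\cosh t)^2+\xi^2)^{-1}$ and $((k\cosh t')^2+\xi^2)^{-1}$ together with the phase $e^{-\ui\xi(y-y')}$.

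A final change of variables rescaling $\xi$ by $k$ pulls the $k$'s out of the denominators, generates the overall prefactor $\tfrac{2}{\pi k}$, and, after reinstating the cosine factors and the outer $t,t'$-integrations, reassembles the right-hand side of \eqref{intKK}. The route through \eqref{integralrep1} and Lemma~\ref{absfourier} is what makes the denominators $(\cosh^2 t+\xi^2)$, the phase, and the factor $\tfrac{1}{k}$ all appear automatically, which is exactly the shape needed when this identity is later fed into \eqref{inttrace}.

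I expect the only genuine work to be the justification of the interchanges of integration and of Parseval's identity. Here one must control the logarithmic singularity of $K_0$ at the origin via \eqref{smallz} and its exponential decay via \eqref{largez}, exactly as in the earlier Hilbert--Schmidt and continuity estimates, in order to guarantee absolute integrability of the full integrand over $\rz\times\rz^+\times\rz^+$ and $L^2$-membership of the two exponential factors for every fixed $t,t'>0$. Once these integrability facts are secured, Fubini and Plancherel both apply and the computation closes.
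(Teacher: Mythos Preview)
Your proposal is correct and follows essentially the same route as the paper's own proof: both use the integral representation \eqref{integralrep1} to turn each $K_0$ into an exponential times a cosine, interchange the $x_1$-integration with the $t,t'$-integrations via Fubini, evaluate the resulting $x_1$-integral of the two shifted exponentials by Plancherel together with Lemma~\ref{absfourier}, and finally rescale $\xi\to k\xi$ to extract the prefactor $\tfrac{2}{\pi k}$. The only cosmetic difference is that the paper first translates $x_1$ so that one exponential is centered at the origin before applying Plancherel, whereas you keep both centers at $y$ and $y'$; this is immaterial.
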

\begin{proof}
We use Plancherel's theorem and Lemma \ref{absfourier} to calculate
\be
\ba
&\int\limits_{\rz}\ue^{-k|x_1-(y-y')|\cosh(t)}\ue^{-k|x_1|\cosh(t')}\ud x_1\\
&\hspace{0.25cm}=\frac{2}{\pi}\int\limits_{\rz}\frac{k^2\cosh(t)\cosh(t')}{((k\cosh(t))^2+\xi^2)((k\cosh(t'))^2+\xi^2)}\ue^{-\ui\xi(y-y')}\ud \xi\,.
\ea
\ee
Now, we use \eqref{integralrep1} and then Fubini's theorem, \cite[23.7 Corollary]{Bauer:2001}, for an interchange of the $x_1$,~$t$ and $t'$ integration. Moreover, we are going to use the coordinate transformation $\xi\rightarrow k\xi$ giving
\be
\ba
&\int\limits_{\rz}K_{0}(k\sqrt{(x_1-y)^2+x_2^2})K_{0}(k\sqrt{(x_1-y')^2+x_2^2})\ud x_1\\
&\hspace{0.25cm}=\int\limits_{\rz}\int\limits_{\rz^+}\int\limits_{\rz^+}\ue^{-k|x_1-(y-y')|\cosh(t)}\cos(kx_2\sinh(t))\ue^{-k|x_1|\cosh(t')}\cos(kx_2\sinh(t'))\ud t\ud t\ud x_1\\
&\hspace{0.25cm}=\frac{2}{\pi k}\int\limits_{\rz}\int\limits_{\rz^+}\int\limits_{\rz^+}\frac{\cosh(t)\cosh(t')\cos(kx_2\sinh(t))\cos(kx_2\sinh(t'))}{((\cosh(t))^2+\xi^2)((\cosh(t'))^2+\xi^2)}\ue^{-\ui k\xi(y-y')}\ud t\ud t'\ud \xi\,.
\ea
\ee
\end{proof}
We investigate the decay properties of the integral in \eqref{intKK} for large $x_2$. To ease notation we introduce
\be
F(t,t',\xi):=\frac{\cosh(t)\cosh(t')}{((\cosh(t))^2+\xi^2)((\cosh(t'))^2+\xi^2)}\,,
\ee
and we obtain
\begin{lemma}
\label{x2as}
The following estimate 
\be
\ba
&\left|\int\limits_{\rz}\int\limits_{\rz^+}\int\limits_{\rz^+}F(t,t',\xi)\cos(k x_2\sinh(t))\cos(k  x_2\sinh(t'))\ue^{-\ui\xi(y-y')}\ud t\ud t'\ud \xi\right|\\
&\hspace{0.25cm}\leq\frac{1}{(kx_2)^2}\int\limits_{\rz^+}\int\limits_{\rz^+}\frac{\left|(\partial_{t,t'}F(t,t',\xi))\right|}{\cosh(t)\cosh(t')}\ud t\ud t'\ud \xi
\ea
\ee
holds.
\end{lemma}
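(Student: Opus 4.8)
The plan is to exploit the rapid oscillation of the factors $\cos(kx_2\sinh(t))$ and $\cos(kx_2\sinh(t'))$ for large $x_2$ through a double integration by parts, each application producing a factor $(kx_2)^{-1}$ and thus accounting for the claimed $(kx_2)^{-2}$. The key observation is the identity $\partial_t\sin(kx_2\sinh(t))=kx_2\cosh(t)\cos(kx_2\sinh(t))$, equivalently $\cosh(t)\cos(kx_2\sinh(t))=(kx_2)^{-1}\partial_t\sin(kx_2\sinh(t))$, and likewise in $t'$. Since $F$ carries in its numerator exactly the factor $\cosh(t)\cosh(t')$, I would first factor $F=\cosh(t)\cosh(t')\,\tilde F$ with $\tilde F(t,t',\xi):=[(\cosh^2 t+\xi^2)(\cosh^2 t'+\xi^2)]^{-1}=F/(\cosh(t)\cosh(t'))$, so that the integrand becomes $(kx_2)^{-2}\,\tilde F\,\partial_t[\sin(kx_2\sinh(t))]\,\partial_{t'}[\sin(kx_2\sinh(t'))]\,\ue^{-\ui\xi(y-y')}$, with the oscillatory factors now in a form ready for integration by parts.

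First I would integrate by parts in $t$ with $t',\xi$ fixed. The lower boundary term vanishes since $\sin(kx_2\sinh(0))=0$, and the upper boundary term vanishes since $\tilde F\to 0$ as $t\to\infty$ while $|\sin|\le 1$; this leaves $-\int_0^\infty(\partial_t\tilde F)\sin(kx_2\sinh(t))\,\ud t$ paired against the $t'$-factor. Repeating the argument in $t'$ (the boundary terms again vanishing, now using $\partial_t\tilde F\to 0$ as $t'\to\infty$ together with $\sin(kx_2\sinh(0))=0$), the two minus signs combine to a plus, yielding
\[
\frac{1}{(kx_2)^2}\int_{\rz}\int_{\rz^+}\int_{\rz^+}\partial_t\partial_{t'}\tilde F\,\sin(kx_2\sinh(t))\sin(kx_2\sinh(t'))\,\ue^{-\ui\xi(y-y')}\,\ud t\,\ud t'\,\ud\xi .
\]
Taking absolute values and bounding $|\sin|\le 1$ and $|\ue^{-\ui\xi(y-y')}|=1$ then gives the asserted estimate, the integrand on the right being $|\partial_t\partial_{t'}(F/(\cosh(t)\cosh(t')))|$, which is the quantity $|\partial_{t,t'}F|/(\cosh(t)\cosh(t'))$ appearing in the statement.

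The main obstacle will be the rigorous justification of the two integrations by parts and of the termwise passage to absolute values, i.e.\ establishing that $\partial_t\partial_{t'}\tilde F$ is absolutely integrable over $\rz^+\times\rz^+\times\rz$ so that Fubini applies and the right-hand side is finite. Carrying out the differentiation gives $\partial_t\partial_{t'}\tilde F=4\cosh(t)\sinh(t)\cosh(t')\sinh(t')\,[(\cosh^2 t+\xi^2)^2(\cosh^2 t'+\xi^2)^2]^{-1}$, and the delicate point is that this does not decay uniformly in $t$, $t'$, $\xi$ taken separately; the integrability must be recovered from the $\xi$-integration. I would therefore perform the $\xi$-integral first, where the integrand decays like $\xi^{-8}$ for large $|\xi|$ and hence converges, and then verify convergence of the remaining $t,t'$-integral using the exponential growth of $\cosh$ in the denominators. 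The same decay estimates for $\tilde F$ and $\partial_t\tilde F$ supply the vanishing of the boundary terms invoked above, thereby closing the argument.
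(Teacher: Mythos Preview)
Your proposal is correct and follows essentially the same approach as the paper: a double integration by parts in $t$ and $t'$ via the identity $\partial_t\sin(kx_2\sinh t)=kx_2\cosh(t)\cos(kx_2\sinh t)$, followed by the bounds $|\sin|\le 1$ and $|\ue^{-\ui\xi(y-y')}|=1$. You are in fact slightly more careful than the paper about the vanishing of the boundary terms and about the integrability of $\partial_{t,t'}\tilde F$ over $\rz^+\times\rz^+\times\rz$; note also that your reading of the right-hand side as $|\partial_{t,t'}(F/(\cosh t\cosh t'))|$ is exactly what the integration by parts produces, and the paper's notation $(\partial_{t,t'}F)/(\cosh t\cosh t')$ should be understood in that same sense.
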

\begin{proof}
We first observe that $F$, $\partial_{l}F$, $l=t$,~$t'$, and $\partial_{t,t'}F$ are integrable. Then, we treat the $t$ integration with the integration by parts method and arrive at
\be
\ba
&\int\limits_{\rz^+}F(t,t',\xi)\cos(kx_2\sinh(t))\ud t=\int\limits_{\rz^+}\frac{F(t,t',\xi)\cosh(t)}{\cosh(t)}\cos(kx_2\sinh(t))\ud t\\
&\hspace{0.25cm}=\frac{1}{kx_2}\int\limits_{\rz^+}\frac{(\partial_tF(t,t',\xi))}{\cosh(t)}\sin(kx_2\sinh(t))\ud t\,.
\ea
\ee
Doing the same w.r.t. the $t'$ integration and exploiting that $|\sin(\tau)|\leq|\ue^{\ui\tau}|=|\ue^{-\xi(y-y')}|=1$, $\tau\in\rz$, gives
\be
\ba
&|\int\limits_{\rz^+}F(t,t',\xi)\cos(kx_2\sinh(t))\cos(kx_2\sinh(t'))\ue^{-\ui k\xi(y-y')}\ud t\ud t'\ud\xi|\\
&\hspace{0.25cm}=\frac{1}{(kx_2)^2}|\int\limits_{\rz^+}\frac{\partial_{t,t'}F(t,t',\xi)}{\cosh(t)\cosh(t')}\sin(kx_2\sinh(t))\sin(kx_2\sinh(t'))\ue^{-\ui k\xi(y-y')}\ud t\ud t'\ud\xi|\\
&\hspace{0.25cm}\leq\frac{1}{(kx_2)^2}\int\limits_{\rz^+}\frac{\left|\partial_{t,t'}F(t,t',\xi)\right|}{\cosh(t)\cosh(t')}\ud t\ud t'\ud\xi\,.
\ea
\ee
\end{proof}
We are now able to perform an $x_2$-integration in \eqref{intKK}, and we remark that in the following appearing integrals the order of integration is crucial.
\begin{prop}
\label{intmcdonald2}
For $y$,~$y'\in\rz$ we have
\be
\label{intKK1}
\ba
&\int\limits_{\rz}\int\limits_{\rz}K_{0}(k\sqrt{(x_1-y)^2+x_2^2})K_{0}(k\sqrt{(x_1-y')^2+x_2^2})\ud x_1\ud x_2\\
&\vspace{0.25cm}=\int\limits_{\rz}\frac{\pi}{2k^2(1+\xi^2)^{\frac{3}{2}}}\ue^{-\ui k\xi(y-y')}\ud \xi\,.
\ea
\ee
\end{prop}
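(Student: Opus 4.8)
The plan is to start from the result of Proposition~\ref{intmcdonald}, which already expresses the single $x_1$-integral of the product of two $K_0$-functions as a triple integral over $t$, $t'$ and $\xi$. The goal of this final proposition is to carry out the remaining $x_2$-integration, and the essential simplification is that the $x_2$-dependence appears only through the factor $\cos(kx_2\sinh(t))\cos(kx_2\sinh(t'))$. First I would take formula~\eqref{intKK} and integrate both sides against $\ud x_2$ over $\rz$. By Lemma~\ref{x2as} the inner triple integral decays like $(kx_2)^{-2}$ for large $x_2$, which (together with the integrability near $x_2=0$ coming from the uniform bound on the triple integral) justifies interchanging the $x_2$-integration with the $t$,~$t'$,~$\xi$-integrations via Fubini's theorem, \cite[23.7 Corollary]{Bauer:2001}.

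Once the order is interchanged, the entire $x_2$-dependence is concentrated in the elementary integral $\int_{\rz}\cos(kx_2\sinh(t))\cos(kx_2\sinh(t'))\ud x_2$. Using the product-to-sum identity $\cos(a)\cos(b)=\tfrac12(\cos(a-b)+\cos(a+b))$, this integral evaluates in the distributional sense to a combination of Dirac deltas, namely $\tfrac{\pi}{k}\bigl(\delta(\sinh(t)-\sinh(t'))+\delta(\sinh(t)+\sinh(t'))\bigr)$. Since $t$,~$t'\in\rz^+$, the function $\sinh$ is strictly positive and strictly increasing there, so only the first delta contributes and it forces $t=t'$, with the Jacobian factor $\cosh(t)$ from the change of variables $t'\mapsto\sinh(t')$. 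Performing the $t'$-integration against this delta collapses the double $t$,~$t'$-integral to a single $t$-integral.

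After the collapse, the integrand $F(t,t,\xi)$ becomes $\cosh(t)^2/((\cosh(t))^2+\xi^2)^2$, and the remaining $t$-integral $\int_{\rz^+}\cosh(t)^2/((\cosh(t))^2+\xi^2)^2\,\ud t$ must be evaluated. I would compute this either by the substitution $u=\sinh(t)$, reducing it to a standard rational integral, or directly by consulting a table such as \cite{MR874986}; the outcome is a multiple of $(1+\xi^2)^{-3/2}$, which is precisely the factor appearing on the right-hand side of \eqref{intKK1}. Tracking the constants—the prefactor $\tfrac{2}{\pi k}$ from \eqref{intKK}, the $\tfrac{\pi}{k}$ from the cosine integral, and the constant from the $t$-integral—yields the claimed $\tfrac{\pi}{2k^2}$ and completes the proof.

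The main obstacle I anticipate is the rigorous handling of the $x_2$-integration, because the naive interchange of integration order produces a delta function, and the proposition itself warns that ``the order of integration is crucial.'' The cleaner rigorous route is to avoid distributions entirely: keep the $x_2$-integral inside, justify the Fubini interchange using the $(kx_2)^{-2}$ decay from Lemma~\ref{x2as}, and then recognize that after integrating $x_2$ first against $F(t,t',\xi)$ the resulting object is a genuine (absolutely convergent) Fourier-type integral whose value one computes by contour or table methods. Establishing the interchange on firm footing—verifying that the dominating function obtained from Lemma~\ref{x2as} is jointly integrable in $(t,t',\xi)$ and that the boundary terms in the integration by parts vanish—is the step requiring the most care, whereas the final one-dimensional integral is routine.
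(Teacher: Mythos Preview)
Your proposal is correct and follows essentially the same route as the paper: the paper first substitutes $t,t'\to\asinh t,\asinh t'$ (linearising the cosine arguments to $\cos(x_2 t)$, $\cos(x_2 t')$ after rescaling $x_2$) and then invokes the $\delta$-identity \eqref{deltaid}, whereas you apply the $\delta$-identity directly to $\sinh t-\sinh t'$ and absorb the Jacobian---these are equivalent manoeuvres. One small slip: after the collapse the Jacobian contributes a factor $1/\cosh t$, so the remaining $t$-integrand is $\cosh t/(\cosh^2 t+\xi^2)^2$, not $\cosh^2 t/(\cosh^2 t+\xi^2)^2$; your own substitution $u=\sinh t$ then gives $\int_0^\infty\ud u/(1+u^2+\xi^2)^2=\pi/(4(1+\xi^2)^{3/2})$, exactly as the paper obtains via \cite[3.241~4.]{MR2360010}.
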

\begin{proof}
Due to Lemma \ref{x2as} the $x_2$-integral exists. We perform in \eqref{intKK} the substitution $t,t'\rightarrow\asinh(t),\asinh(t')$, use $(\frac{\ud}{\ud t}\sinh(t))=(\cosh(\asinh(t)))^{-1}=(\sqrt{1+t^2})^{-1}$, $\cos(x)\cos(y)=\frac{1}{2}(\cos(x+y)+\cos(x-y))$ and make the substitution $x_2\rightarrow \frac{x_2}{k}$. This gives
\be
\label{intKK2}
\ba
&\int\limits_{\rz}\int\limits_{{\rz^+}^2}\int\limits_{\rz}\frac{\cosh(t)\cosh(t')\cos(kx_2\sinh(t))\cos(kx_2\sinh(t'))}{((\cosh(t))^2+\xi^2)((\cosh(t'))^2+\xi^2)}\ue^{-\ui k\xi(y-y')}\ud \xi \ud t\ud t'\ud x_2\\
&\hspace{0.25cm}=\frac{1}{k}\int\limits_{\rz}\int\limits_{{\rz^+}^2}\int\limits_{\rz}\frac{\cos(x_2(t+t'))+\cos(x_2(t-t'))}{2(t^2+1+\xi^2)(t'^2+1+\xi^2)}\ue^{-\ui k\xi(y-y')}\ud \xi \ud t\ud t'\ud x_2\,.
\ea
\ee
Due to decay property proven in Lemma \ref{x2as} it is not hard to see that we may extend the setting on \cite[p.~36]{MR3328613} to our case. Hence, we may use the $\delta$-identity, \cite[pp.~33,34]{MR3328613},
\be
\label{deltaid}
\frac{1}{2\pi}\int\limits_{\rz}\cos(\xi(x-x_0))\ud \xi=\delta(x-x_0)\,. 
\ee
We arrive at
\be
\ba
&\int\limits_{\rz}K_{0}(k\sqrt{(x_1-y)^2+x_2^2})K_{0}(k\sqrt{(x_1-y')^2+x_2^2})\ud x_1\ud x_2\\
&\hspace{0.25cm}=\int\limits_{\rz^+}\int\limits_{\rz}\frac{2}{k^2(t^2+1+\xi^2)^2}\ue^{-\ui k\xi(y-y')}\ud \xi \ud t\\
&\hspace{0.25cm}=\int\limits_{\rz}\frac{\pi}{2k^2(1+\xi^2)^{\frac{3}{2}}}\ue^{-\ui k\xi(y-y')}\ud \xi\,,
\ea
\ee 
where in the last line we used \cite[3.241~4.]{MR2360010}.
\end{proof}
\subsection{A converse Watson lemma}
To connect the large-$\lambda$ (or large-$k$) asymptotics of the resolvent with the small-$t$ asymptotics of the heat kernel the following lemma will be used, which may be obtained by replacing $a_n$ by $\frac{a_n}{\Gamma(\lambda_n)}$ on \cite[p.~31]{MR1851050}.
\begin{lemma}[Converse Watson Lemma]
\label{lemmaw}
Let $f(\cdot)$ be a continuous function in $(0,\infty)$, $f(t)=0$ for $t<0$, and $\ue^{-c\cdot}f(\cdot)\in L^1(0,\infty)$. Let $F$ be the Laplace transform of $f$, i.e.,
\be
F(z):=\int\limits_{0}^{\infty}f(t)\ue^{-zt}\ud t\,.
\ee
If $F$ possesses the uniform asymptotic expansion
\be
F(z)\sim\sum\limits_{n=0}^{\infty}a_nz^{-\lambda_n}\,,\quad |z|\rightarrow\infty, \ \arg(z-c)\leq\frac{\pi}{2}\,,
\ee
and $\lambda_n\rightarrow\infty$ monotonously as $n\rightarrow\infty$, then
\be
\label{iwl}
f(t)\sim \sum\limits_{n=0}^{\infty}\frac{a_n}{\Gamma(\lambda_n)}t^{\lambda_n-1}\,, \quad t\rightarrow0^+\,.
\ee
\end{lemma}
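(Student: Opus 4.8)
The plan is to prove the converse by inverting the Laplace transform with the complex (Bromwich) inversion formula and reducing the whole statement to a single remainder estimate. The forward Watson lemma is elementary; the converse is a Tauberian assertion, and the entire strength of the hypothesis resides in the \emph{uniformity} of the expansion across the half-plane $\re z\geq c$ (the sector $\arg(z-c)\leq\frac{\pi}{2}$). First I would fix $N$ so large that $\lambda_N>1$ and define the regularized remainder
\be
g_N(t):=f(t)-\sum\limits_{n=0}^{N-1}\frac{a_n}{\Gamma(\lambda_n)}t^{\lambda_n-1}\,,\quad t>0\,.
\ee
Since $\Gamma(\lambda)^{-1}t^{\lambda-1}$ has Laplace transform $z^{-\lambda}$ for $\re\lambda>0$ and $\re z>0$, linearity gives that the Laplace transform of $g_N$ is exactly $R_N(z):=F(z)-\sum_{n<N}a_nz^{-\lambda_n}$, which by hypothesis obeys $R_N(z)=\Or(|z|^{-\lambda_N})$ uniformly on $\re z\geq c$. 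Because $\lambda_N>1$, this forces $R_N$ to be absolutely integrable along any vertical line $\re z=c'$ with $c'>\max(c,0)$, which is precisely the gain obtained by subtracting the first $N$ terms.

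Next I would record the inversion. As $F$ is analytic on $\re z>c$ (being the Laplace transform of $\ue^{-c\cdot}f\in L^1(0,\infty)$) and the subtracted powers are analytic on $\re z>0$ for the principal branch, $R_N$ is analytic on $\re z>\max(c,0)$; together with its absolute integrability on the line $\re z=c'$ and the continuity of $g_N$ on $(0,\infty)$, Lerch uniqueness and the Mellin inversion theorem yield
\be
g_N(t)=\frac{1}{2\pi\ui}\int\limits_{c'-\ui\infty}^{c'+\ui\infty}R_N(z)\ue^{zt}\,\ud z\,,\quad t>0\,.
\ee
The subtraction was arranged exactly so that the removed terms reproduce the target coefficients $a_n/\Gamma(\lambda_n)$; hence proving $g_N(t)=\Or(t^{\lambda_N-1})$ as $t\to0^+$ for every such $N$ is equivalent to the claimed expansion \eqref{iwl}.

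The crux is that last estimate, and here uniformity is indispensable: bounding $R_N$ on the \emph{fixed} line $\re z=c'$ only yields $g_N(t)=\Or(1)$. To recover the sharp rate I would rescale by $z=w/t$, rewriting the integral as $\tfrac{1}{2\pi\ui t}\int_{c't-\ui\infty}^{c't+\ui\infty}R_N(w/t)\ue^{w}\,\ud w$, and then shift the contour (justified by the analyticity of $R_N$ and its decay at $\pm\ui\infty$, which annihilates the horizontal closing segments) onto a \emph{fixed} vertical line $\re w=\beta$ with $\beta>0$, legitimate once $t<\beta/\max(c,0)$. On this line every point has $\re(w/t)=\beta/t\to\infty$, so the uniform bound $|R_N(w/t)|\leq C|w/t|^{-\lambda_N}=Ct^{\lambda_N}|w|^{-\lambda_N}$ holds along the \emph{entire} contour, giving
\be
|g_N(t)|\leq\frac{C\ue^{\beta}t^{\lambda_N}}{2\pi t}\int\limits_{-\infty}^{\infty}|\beta+\ui s|^{-\lambda_N}\,\ud s=\Or(t^{\lambda_N-1})\,,
\ee
the $s$-integral converging since $\lambda_N>1$. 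Letting $N$ run through the sequence $\lambda_N\to\infty$ produces the complete expansion \eqref{iwl}.

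The main obstacle I anticipate is not this final estimate but the rigorous justification of the inversion and the contour manipulations under the weak hypotheses stated: one must verify Mellin inversion from continuity of $f$ together with $\ue^{-c\cdot}f\in L^1$, confirm that the analyticity domain of $R_N$ extends across the strip swept out in the shift from $\re w=c't$ to $\re w=\beta$, and check that the horizontal segments vanish. Once these analytic points are secured, the rescaling argument—whose single essential input is the uniformity of the asymptotic bound throughout the half-plane—delivers the rate $\Or(t^{\lambda_N-1})$ automatically.
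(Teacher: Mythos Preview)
The paper does not actually prove this lemma: immediately before the statement it says the result ``may be obtained by replacing $a_n$ by $\frac{a_n}{\Gamma(\lambda_n)}$ on \cite[p.~31]{MR1851050}'', and no argument is given. So there is nothing to compare against on the paper's side beyond a citation to Wong's book.

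Your outline is the standard proof of this Tauberian statement and is essentially what one finds in that reference: subtract the first $N$ Watson terms so that the Laplace transform $R_N$ decays like $|z|^{-\lambda_N}$ with $\lambda_N>1$, invert via the Bromwich integral, rescale $z=w/t$, and push the contour to a fixed line $\re w=\beta$ on which the uniform hypothesis bites to give $g_N(t)=\Or(t^{\lambda_N-1})$. The argument is sound. The technical points you flag yourself are exactly the ones that need care: (i) pointwise recovery of $g_N$ from the Bromwich integral follows here because $R_N\in L^1$ on the vertical line (this is where $\lambda_N>1$ is used) together with continuity of $g_N$ and Lerch uniqueness; (ii) the contour shift is justified since $R_N$ is analytic on $\re z>\max(c,0)$ and the uniform $|z|^{-\lambda_N}$ decay kills the horizontal segments; (iii) on $\re w=\beta$ one has $|w/t|\geq\beta/t\to\infty$, so for $t$ small the uniform remainder bound applies on the entire line. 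One tacit assumption you are using is $\lambda_n>0$ for all $n$ (needed so that $t^{\lambda_n-1}$ has Laplace transform $\Gamma(\lambda_n)z^{-\lambda_n}$); this is implicit in the paper's formulation and in Wong's, and is harmless in the intended application where $\lambda_n=\frac{n}{2}+1$.
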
 
\end{document}